\def\supg{^{(g)}}
\def\bW{{\bf W}}
\def\bL{{\bf L}}
\def\bw{{\bf w}}
\def\supone{^{(1)}}
\def\supzero{^{(0)}}
\def\supg{^{(g)}}
\def\transpose{{\sf \scriptscriptstyle{T}}}
\def\trans{^{\transpose}}
\def\var{\text{Var}}
\def\calt{\mathcal{T}}
\def\calp{\mathcal{P}}
\newtheorem{subtheorem}{Theorem}
\newtheorem{lemma}{Lemma}
\begin{document}

\begin{center}
\thispagestyle{empty}
\LARGE{Efficient Testing Using Surrogate Information}\\

\vspace*{15mm}
\normalsize Rebecca Knowlton$^{1}$ and Layla Parast$^{2}$ \\
\vspace*{10mm}
\normalsize $^{1}$Department of Statistics and Data Sciences, University of Texas at Austin,\\  \normalsize 105 E 24th St D9800, Austin, TX 78705, rknowlton@utexas.edu \\
\normalsize $^{2}$Department of Statistics and Data Sciences, University  of Texas at Austin, \\ \normalsize 105 E 24th St D9800, Austin, TX 78705, parast@austin.utexas.edu
\end{center}

\vfill 
\noindent Acknowledgments: This work was supported by NIDDK grant R01DK118354 (PI:Parast). We are grateful to the AIDS Clinical Trial Group (ACTG) Network for supplying the HIV clinical trial data.

\clearpage
\thispagestyle{empty}
\begin{abstract}

In modern clinical trials, there is immense pressure to use surrogate markers in place of an expensive or long-term primary outcome to make more timely decisions about treatment effectiveness.  However, using a surrogate marker to test for a treatment effect can be difficult and controversial. Existing methods tend to either rely on fully parametric methods where strict assumptions are made about the relationship between the surrogate and the outcome, or assume the surrogate marker is valid for the entire study population. In this paper, we develop a fully nonparametric method for efficient testing using surrogate information (ETSI). Our approach is specifically designed for settings where there is heterogeneity in the utility of the surrogate marker, i.e., the surrogate is valid for certain patient subgroups and not others. ETSI enables treatment effect estimation and hypothesis testing via kernel-based estimation for a setting where the surrogate is used in place of the primary outcome for individuals for whom the surrogate is valid, and the primary outcome is purposefully only measured in the remaining patients. In addition, we provide a framework for future study design with power and sample size estimates based on our proposed testing procedure. We demonstrate the performance of our methods via a simulation study and application to two distinct HIV clinical trials.

\end{abstract}

\noindent Keywords: clinical trial; heterogeneity; study design; surrogate markers; treatment effect

\clearpage
\setcounter{page}{1}
\section{Introduction}
\allowdisplaybreaks
A common challenge in clinical studies occurs when measuring the primary outcome of interest requires long patient follow-up or is otherwise expensive, typically in terms of patient burden or financial constraints. To alleviate these costs, it has become common practice for researchers to consider using a surrogate marker. A surrogate marker is a measure that is more easily (or quickly) obtained than the primary outcome and that, once validated, could be used to replace the primary outcome when evaluating the effectiveness of a treatment. For complex diseases such as HIV/AIDS and cancer, the use of surrogate markers has greatly improved the ability to evaluate treatment effectiveness quickly and efficiently \citep{fleming1994, degruttola1997, katz2004}. Importantly, a surrogate marker must first be validated to ensure that the treatment effect on the surrogate faithfully translates to the treatment effect on the primary outcome \citep{prentice1989}. Several useful statistical frameworks have been proposed for this purpose \citep{elliott2023surrogate,freedman1992, wang2002, frangakis2002, buyse2000}.

Once a surrogate marker is validated, the ultimate goal is to make a conclusion about the effectiveness of a treatment in a future trial, using the surrogate marker \textit{instead of} the primary outcome. Note that this is different from trying to use a surrogate marker to gain statistical efficiency when estimating a treatment effect. This latter problem of using a surrogate, or auxiliary data in general, to gain efficiency is a well-studied problem. For example, \cite{pepe1992inference} proposes a useful likelihood-based method to gain efficiency in a setting where surrogate information is available for all patients, but the primary outcome is only available for a random subset of patients.  \cite{leung2001} provides a review on so-called \textit{augmented} surrogate studies, which are, similar to \cite{pepe1992inference}, specific to the setting where surrogate information is available for all individuals and the outcome is available for a randomly selected subset. Within a mediation framework, \cite{zhou2023} propose methods to gain efficiency and power by incorporating low and high dimensional mediator information via a parametric linear model.

In this paper, we are not focused on using the surrogate marker to gain statistical efficiency. Instead, we are interested in using the surrogate in place of the primary outcome to test for a treatment effect in a future study. Methods that do address this question tend to either rely on fully parametric methods where strict parametric assumptions are made about the relationship between the surrogate and the outcome, or assume the surrogate marker is valid for the entire study population.  For example, \citet{price2018estimation} define an optimal surrogate which aims to predict the primary outcome with estimation via the super-learner and targeted super-learner, and propose to test for a treatment effect using this optimal surrogate. Under the assumption that the true treatment effects on the surrogate and the primary outcome are bivariate normal, \citet{quan2023utilization} and \citet{saint2019predictive} propose methods that use prior information about the treatment effect on a surrogate to plan a future study. As a nonparametric alternative, \citet{parast2019using} propose kernel-based procedures to test for a treatment effect using only surrogate marker information measured in a future study by borrowing information learned from a prior study. \citet{parast2023using} further develop this approach within a setting where there is heterogeneity in the utility of a surrogate. Heterogeneous surrogate utility occurs when a surrogate marker is more useful for certain subgroups of patients than others \citep{roberts2021, roberts2024surrogacy, parast2023testing, knowlton2024}. When this is true, one must account for heterogeneity when using the surrogate to test for a treatment effect in a future study \citep{parast2023using}. 

Notably, while \citet{parast2023using} accounts for heterogeneity, the approach still requires that the surrogate is sufficiently strong, in the sense of capturing the treatment effect, for all patients. In contrast, when the surrogate is strong for some patients and weak for others, the approach of \citet{parast2023using} is not appropriate, and it is not clear how the surrogate can be used appropriately to replace the primary outcome in a future study. In this paper, we are specifically focused on this setting where the surrogate is strong for only a subgroup of the study population. We propose a fully nonparametric kernel-based testing procedure for efficient testing using surrogate information (ETSI), where we replace the primary outcome with the surrogate information for only an appropriately selected subset of a heterogeneous population, and the primary outcome is purposefully measured for those where the surrogate would be considered weak. The ability to combine this information represents a significant advancement in the literature, as cost savings are still possible in cases where the surrogate is not sufficiently strong to use for the entire population, thus making methods for assessing heterogeneous utility in surrogate markers useful for future clinical trial design. In Section \ref{setting}, we describe our setting and assumptions. Section \ref{testingprocedure} outlines the proposed testing procedure and Section \ref{design} discusses future study design. We evaluate the performance of our method and compare to existing methods via a simulation study in Section \ref{sims} and illustrate our method via application to two distinct HIV clinical trials in Section \ref{example}. Lastly we include a discussion in Section \ref{discussion} of this contribution and possible extensions for future work.

\section{Setting and Assumptions }\label{setting}
\subsection{Notation and Setting} 

Let $G$ be the binary randomized treatment indicator, where $G=1$ indicates assignment to the treatment group and $G=0$ indicates assignment to the control group. Let $Y$ be the primary outcome, and $S$ be the surrogate marker where obtaining $Y$ is more expensive, burdensome, or requires longer follow-up compared to obtaining $S$; without loss of generality, we assume that higher values of $Y$ and $S$ are ``better." We assume that $S$ is continuous and measured after baseline, though $Y$ may be discrete or continuous. Additionally, let $\bW$ be a vector of baseline covariates of interest. Using a potential outcomes framework, each individual has possible outcomes $\{Y\supzero, Y\supone, S\supzero, S\supone\}$ depending on whether they are assigned to the treatment or control group, where $Y^{(g)}$ is the outcome when $G=g$ and $S^{(g)}$ is the surrogate when $G=g$. In practice the observed data are thus $\{Y_{ig},S_{ig},\bW_{ig}\}$ for each individual $i$. Let $n_g$ be the number of individuals in treatment group $g$.

We assume that for an identified region of the covariate space, $\Omega_\bW$, the surrogate has been evaluated to be sufficiently strong. Note that the following setup is agnostic to the method used to identify $\Omega_\bW$, and later in our numerical studies, we describe one possible construction of $\Omega_\bW$ based on the proportion of treatment effect explained. Let the prior clinical trial where the surrogate has been deemed sufficiently strong for patients in $\Omega_\bW$ be called Study A, where $Y$ and $S$ are fully measured. Let Study B denote the subsequent trial of interest, where our goal is to obtain $S$ as a replacement for $Y$ for individuals in $\Omega_{\bW}$ and to obtain $Y$ for others with a weak surrogate. 

Here, we now redefine the potential outcomes such that the study is explicit, i.e., the outcomes are $\{Y_K\supzero, Y_K\supone, S_K\supzero, S_K\supone\}$ for $K=A,B$. Let $\delta_{ig}$ be an indicator variable such that $\delta_{ig}=1$ when the surrogate is sufficiently strong, and thus we have measured $S_{ig}$ in Study B and omitted $Y_{ig}$ for individual $i$ in treatment group $g$ because $\bW_{ig} \in \Omega_\bW$. When $\delta_{ig}=0$, $S_{ig}$ is not measured and $Y_{ig}$ is purposefully measured because $\bW_{ig} \in \Omega_\bW^C$. We denote the observed data in the two studies with explicit superscripts, i.e., our data are i.i.d observations $\{Y_{ig}^A,S_{ig}^A,\bW_{ig}^A\}$ for each individual $i$ in Study A and i.i.d observations $\{Y_{jg}^B(1-\delta_{jg}^B),S_{jg}^B\delta_{jg},\bW_{jg}^B,\delta_{jg}^B\}$ for each individual $j$ in Study B.  Note that $Y_{jg}^B$ is only observed if $\delta_{jg}^B=0$ and $S_{jg}^B$ is only observed if $\delta_{jg}^B=1$. 

Figure \ref{fig_exampletables} illustrates our setting. Specifically, the top portion of Figure \ref{fig_exampletables} shows that in Study A, the surrogate $S_{ig}^A$ and the primary outcome $Y_{ig}^A$ are measured for the entire covariate space $W$. The middle portion shows that Study A data are used to \textit{identify} the region of strong surrogacy, $\Omega_W$, highlighted in pink. Then, in the bottom figure, in Study B, we purposefully \textit{only} measure $S_{jg}^B$ for individuals with $\bW_{jg}^B \in \Omega_W$ ($\delta_{jg}=1$ for these individuals), and \textit{only} measure $Y_{jg}^B$ for individuals with $\bW_{jg}^B \in \Omega_W^C$ ($\delta_{jg}=0$ for these individuals), where $-$ indicates not measured.

Our ultimate goal is to test the null hypothesis that there is no treatment effect on $Y$ in Study B:
\begin{eqnarray*}
H_0:&& \Delta_B \equiv E(Y_B\supone) - E(Y_B \supzero) =0.\\
H_1:&& \Delta_B \neq 0.
\end{eqnarray*}

\noindent Of course, one could carry out such a test by simply obtaining $Y$ for all individuals and performing, for example, a t-test. In contrast, our aim is to test $H_0$ by appropriately leveraging surrogate information, replacing $Y$ with $S$, such that we avoid measuring $Y$ for a specific subset of individuals. In the following section, we state our assumptions and then describe our proposed testing procedure.

\subsection{Assumptions \label{assumptions}}
We require the following conditions to hold within $\Omega_{\bW}$, the region of strong surrogacy:
\begin{enumerate}
\item[] (C1) $\nu_{K0}(s)$ and $\nu_{K1}(s)$ are monotone increasing in $s$, where $\nu_{Kg}(s) = E(Y_K\supg| \bW_K \in \Omega_{\bW},S_K\supg=s)$, for $K=A,B$;
\item[] (C2) $P(S_K\supone > s | \bW_K \in \Omega_{\bW}) \geq P(S_K\supzero > s| \bW_K \in \Omega_{\bW}) \forall s$, for $K=A,B$;
\item[] (C3) $\nu_{K1}(s) \geq \nu_{K0} (s) \forall s$, for $K=A,B$;
\item[] (C4) $\nu_{A0}(s) = \nu_{B0}(s) \forall s$;
\item[] (C5) Surrogacy heterogeneity with respect to $\bW$ in Study A and Study B is the same; and
\item [] (C6) $S_A\supone$, $S_A\supzero$, $S_B\supone$ and $S_B\supzero$ are continuous random variables with the same finite support over an interval $[a,b]$.
\end{enumerate}
In addition, within $\Omega_{\bW}^C$, we require:
\begin{enumerate}
\item[] (C7) $E(Y_K \supone) \geq E(Y_K \supzero)$, for $K=A,B$.
\end{enumerate}

\noindent Assumptions (C1)-(C3) are similar to those commonly required in the surrogate marker literature, as they are sufficient to guard against the surrogate paradox \citep{wang2002, taylor2005counterfactual, parast2016}. Specifically, (C1) guarantees that the surrogate and the primary outcome have a non-negative relationship. (C2) ensures a non-negative treatment effect on the surrogate marker within $\Omega_{\bW}$, while (C3) is equivalent to $\nu_1(s) - \nu_0(s) \geq 0$ for all $s$ and ensures that there is a non-negative residual treatment effect within $\Omega_{\bW}$, after accounting for the effect of the treatment on the surrogate. Additionally, we require the transportability assumptions (C4) and (C5), where (C4) requires the conditional mean functions in the control groups to be the same between Study A and Study B and (C5) requires the surrogacy heterogeneity to be the same. These assumptions are what allows us to use the information from Study A on individuals in Study B in $\Omega_\bW$. Clearly these are strong untestable assumptions, but we must assume something about the transportability between Study A and Study B; without it, it would not be realistic to think there is any validity in borrowing information from Study A. Assumption (C6) is required for implementing kernel smoothing in our nonparametric estimation procedure. Assumption (C7) parallels (C3) and ensures that there is a non-negative treatment effect in $\Omega_{\bW}^C$. While some of these assumptions may be explored empirically to some extent, they can be difficult to validate in practice. 

\section{Proposed Testing Procedure} \label{testingprocedure}

We consider the gold standard for testing $H_0$ to be simply obtaining $Y$ for all individuals in Study B. However, such an approach can be costly in many ways and fails to leverage surrogate information learned from Study A to alleviate financial costs or patient burden. At the other extreme, one could consider not obtaining $Y$ for anyone in Study B, and instead obtaining $S$ for everyone in Study B and using $S$ to replace $Y$ to test $H_0$. We refer to this existing approach as the surrogate-only testing procedure. Our proposed testing procedure will be a compromise between the gold standard and the surrogate-only testing procedure. To motivate our proposal, we describe the surrogate-only testing procedure below, and then our proposed testing procedure.

\subsection{Surrogate-only Testing Procedure}
 Prior work considers a surrogate-only testing procedure based on a treatment effect quantity that fully replaces the outcome with the surrogate in Study B, borrowing information from Study A \citep{parast2019using,parast2023testing}:
\begin{equation}\Delta_{AB} \equiv \int \mu_{A0}(s)dF_{B1}(s) - \int \mu_{A0}(s)dF_{B0}(s) \label{delta_ab}
\end{equation}
where $\mu_{K0}(s) = E(Y_K\supzero \mid S_K\supzero=s)$ and $F_{Bg}(s)$ is the cumulative distribution function of $S_{B} \supg$ (see Remark 1 for details about this construction). Note that the conditional mean $\mu_{A0}(s)$ is how information from Study A is used to infer the primary outcome in Study B. A testing procedure can then be constructed based on an estimate of $\Delta_{AB}$; we compare our proposed test to this approach in Section \ref{sims}. Importantly, this surrogate-only testing approach is inappropriate if the surrogate marker is an inadequate replacement of the outcome for some individuals. In the following section, we propose a testing procedure to handle such a setting.  

\vspace*{5mm}

\noindent \textit{Remark 1.} At first glance, it may be unclear why $\Delta_{AB}$ is defined such that $\mu_{A0}(s)$ is used in \textit{both} terms. Thus, we briefly explain the motivation behind this construction. First note that $\Delta_B$ itself can be expressed as 
\begin{eqnarray}
\Delta_B &=& E(Y_B\supone) - E(Y_B \supzero) \nonumber \\
&=& E_{S_B\supone}[E(Y_B\supone | S_B\supone=s)] - E_{S_B\supzero}[E(Y_B \supzero| S_B\supzero=s)] \nonumber\\
&=& \int \mu_{B1}(s)dF_{B1}(s) - \int \mu_{B0}(s)dF_{B0}(s). \label{delta_long}
\end{eqnarray}
Of course, the $\mu_{Bg}(s)$ components of (\ref{delta_long}) involve $Y$ from Study B. One could decide to simply replace these components with their parallel components from Study A, i.e., define $\Delta_{AB} $ as $\int \mu_{A1}(s)dF_{B1}(s) - \int \mu_{A0}(s)dF_{B0}(s)$. Now, we could estimate this without needing any information about $Y$ from Study B. However, one problem with this approach is that it can be shown that unless $S$ is a \textit{perfect} surrogate marker, it would be possible for this naive $\Delta_{AB} > \Delta_B$, and in an extreme case where $S$ is a poor surrogate, this naive $\Delta_{AB}$ may incorrectly indicate a treatment effect when in fact, $\Delta_B=0$. Therefore, \citet{parast2023testing} argues to replace \textit{both} conditional means in (\ref{delta_long}) with $\mu_{A0}(s)$, borrowing only from the control group of Study A, which will guarantee that $\Delta_{AB}\leq \Delta_B$. An estimate of this quantity would not require any information about $Y$ from Study B and thus would save the cost of measuring the primary outcome. It will, by design, be a conservative estimate of $\Delta_B$, but we view this as an advantage because it effectively provides a bound. 

\subsection{Proposed Pooled Treatment Effect Quantity} \label{pooled_quantity}

In contrast to $\Delta_{AB}$, we now propose a treatment effect quantity that only replaces the primary outcome with the surrogate marker in the subgroup where the surrogate is strong, i.e., $\Omega_{\bW}$, and uses the primary outcome for  $\Omega_{\bW}^C$. Let $\pi_B = P(\bW_B \in \Omega_{\bW}) = P(\delta_B = 1)$, the probability that $\bW_B$ is contained within $\Omega_{\bW}$, which is the same in each treatment group when treatment is randomized. Note that: 
\begin{eqnarray*}
     E(Y_B\supone) &=& E(Y_B\supone| \bW_B \in \Omega_{\bW})\pi_B + E(Y_B\supone| \bW_B \in \Omega_{\bW}^C)(1-\pi_B) \\
     &=& E_{S_B\supone} [E(Y_B\supone| \bW_B \in \Omega_{\bW},S_B\supone=s)]\pi_B + E(Y_B\supone| \bW_B \in \Omega_{\bW}^C)(1-\pi_B) \\
     &=& \pi_B \int \nu_{B1}(s)dF_{B1|\Omega_{\bW}}(s) + (1-\pi_B) \int y dG_{B1|\Omega_{\bW}^C}(y),  
 \end{eqnarray*}  where $\nu_{Bg}(s) = E(Y_B\supg| \bW_B \in \Omega_{\bW},S_B\supg=s)$, $F_{B1|\Omega_{\bW}}(s)$ is the cumulative distribution function of $S_{B} \supone$ given $\bW_B \in \Omega_{\bW}$, and $G_{B1|\Omega_{\bW}^C}$ is the cumulative distribution function of $Y_{B} \supone$ given $\bW_B \in \Omega_{\bW}^C$. Thus, we define the following treatment effect quantity:
 \begin{eqnarray*}
     \Delta_P &\equiv& \pi_B \int \nu_{A0}(s)dF_{B1|\Omega_{\bW}}(s) + (1-\pi_B) \int y dG_{B1|\Omega_{\bW}^C}(y) \\&& \hspace*{1.5in} - \pi_B \int \nu_{A0}(s)dF_{B0|\Omega_{\bW}}(s) - (1-\pi_B) \int y dG_{B0|\Omega_{\bW}^C}(y)\\
     &=& \pi_B \left \{\int \nu_{A0}(s)dF_{B1|\Omega_{\bW}}(s) - \int \nu_{A0}(s)dF_{B0|\Omega_{\bW}}(s) \right \} \\&& \hspace*{1.5in} + (1-\pi_B) \left \{ \int y dG_{B1|\Omega_{\bW}^C}(y) - \int y dG_{B0|\Omega_{\bW}^C}(y) \right \},
 \end{eqnarray*}
 where, similar to the description in Remark 1, we replace the conditional expectations involving the surrogate with the conditional expectation from the control group in Study A, thus using Study A to infer the primary outcome in Study B when $\bW \in \Omega_{\bW}$. If $E(Y_A\supzero| \bW_A \in \Omega_{\bW},S_A\supzero=s) = E(Y_A\supzero| S_A\supzero=s)$, and $F_{Bg|\Omega_{\bW}}(s) = F_{Bg}(s)$ and $G_{Bg|\Omega_{\bW}^C}(y) = P(Y_B\supg \leq y),$ then $\Delta_P$ is equal to a straightforward pooled treatment effect quantity: $(1-\pi_B) \Delta_B + \pi_B \Delta_{AB}.$ However, we would generally not assume that these equalities to hold. Nonetheless, we refer to $\Delta_P$ as a ``pooled" quantity because it pools both $Y$ and $S$ information to quantify the treatment effect. 

 Under the assumptions detailed in Section \ref{assumptions}, it can be shown that the proposed $\Delta_P$ quantity has the following two desirable properties:
 
 \begin{subtheorem} \label{1a}
    If $\Delta_B = 0$, then $~\Delta_P = 0$. 
\end{subtheorem}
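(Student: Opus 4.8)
The plan is to decompose both $\Delta_B$ and $\Delta_P$ according to the partition $\{\Omega_{\bW},\Omega_{\bW}^C\}$ of the covariate space, and then show that the hypothesis $\Delta_B=0$ forces the $\Omega_{\bW}$-component and the $\Omega_{\bW}^C$-component of $\Delta_P$ to vanish separately. Write $\Delta_B^{\Omega} = E(Y_B\supone\mid \bW_B\in\Omega_{\bW}) - E(Y_B\supzero\mid \bW_B\in\Omega_{\bW})$ and $\Delta_B^{\Omega^C} = E(Y_B\supone\mid \bW_B\in\Omega_{\bW}^C) - E(Y_B\supzero\mid \bW_B\in\Omega_{\bW}^C)$, so that iterated expectation gives $\Delta_B = \pi_B\,\Delta_B^{\Omega} + (1-\pi_B)\,\Delta_B^{\Omega^C}$. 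The first step is to establish that both components are nonnegative.

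For $\Delta_B^{\Omega^C}$, nonnegativity is exactly (C7) with $K=B$. For $\Delta_B^{\Omega}$, I would use $\Delta_B^{\Omega} = \int \nu_{B1}(s)\,dF_{B1|\Omega_{\bW}}(s) - \int \nu_{B0}(s)\,dF_{B0|\Omega_{\bW}}(s)$ and insert the intermediate quantity $\int \nu_{B0}(s)\,dF_{B1|\Omega_{\bW}}(s)$, yielding
\[
\int \nu_{B1}(s)\,dF_{B1|\Omega_{\bW}}(s) \ge \int \nu_{B0}(s)\,dF_{B1|\Omega_{\bW}}(s) \ge \int \nu_{B0}(s)\,dF_{B0|\Omega_{\bW}}(s),
\]
where the first inequality is (C3) with $K=B$ and the second holds because $\nu_{B0}$ is monotone increasing by (C1) and $S_B\supone$ stochastically dominates $S_B\supzero$ within $\Omega_{\bW}$ by (C2), i.e. $F_{B1|\Omega_{\bW}}\le F_{B0|\Omega_{\bW}}$ pointwise. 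Hence $\Delta_B^{\Omega}\ge 0$. Since $\Delta_B$ is a convex combination of the two nonnegative quantities $\Delta_B^{\Omega}$ and $\Delta_B^{\Omega^C}$, the assumption $\Delta_B=0$ forces $\pi_B\,\Delta_B^{\Omega}=0$ and $(1-\pi_B)\,\Delta_B^{\Omega^C}=0$.

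It remains to transfer this to $\Delta_P$. The second brace of $\Delta_P$ is precisely $\Delta_B^{\Omega^C}$, so it contributes $(1-\pi_B)\,\Delta_B^{\Omega^C}=0$. For the first brace, I would invoke the transportability assumption (C4) to replace $\nu_{A0}$ by $\nu_{B0}$, so that brace equals $\int \nu_{B0}(s)\,dF_{B1|\Omega_{\bW}}(s) - \int \nu_{B0}(s)\,dF_{B0|\Omega_{\bW}}(s)$. If $\pi_B=0$ this term drops out; otherwise $\Delta_B^{\Omega}=0$, so the two ends of the displayed chain are equal, which forces the middle term to equal each end, in particular $\int \nu_{B0}(s)\,dF_{B1|\Omega_{\bW}}(s) = \int \nu_{B0}(s)\,dF_{B0|\Omega_{\bW}}(s)$. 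Thus the first brace also vanishes, and $\Delta_P = \pi_B\cdot 0 + (1-\pi_B)\cdot 0 = 0$.

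The crux is the sandwich step: one must verify that the stochastic ordering in (C2) is oriented so that, composed with the monotonicity in (C1), it yields $\int \nu_{B0}\,dF_{B1|\Omega_{\bW}} \ge \int \nu_{B0}\,dF_{B0|\Omega_{\bW}}$, and that the equality of the two ends of the chain supplied by $\Delta_B^{\Omega}=0$ genuinely propagates to the middle term. Neither point is subtle beyond careful bookkeeping, but it is worth stating explicitly that we use the $K=B$ versions of (C1)--(C3), that all conditioning is on $\Omega_{\bW}$, and that the transportability assumption (C4) enters only through the replacement of $\nu_{A0}$ by $\nu_{B0}$ in $\Delta_P$.
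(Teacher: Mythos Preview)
Your proposal is correct and follows essentially the same approach as the paper: decompose $\Delta_B$ into its $\Omega_{\bW}$ and $\Omega_{\bW}^C$ pieces, use (C1)--(C3) and (C7) to show each is nonnegative, conclude that $\Delta_B=0$ forces each piece to vanish, and then transfer this to $\Delta_P$ via (C4). The one minor difference is the choice of intermediate term in the sandwich: the paper inserts $\int \nu_{B1}\,dF_{B0|\Omega_{\bW}}$ to prove $\Delta_B^{\Omega}\ge 0$ and then separately bounds the first brace of $\Delta_P$ above and below by zero, whereas your choice of intermediate $\int \nu_{B0}\,dF_{B1|\Omega_{\bW}}$ is (after applying (C4)) exactly one of the two terms in that brace, so the squeeze immediately yields $\int \nu_{B0}\,dF_{B1|\Omega_{\bW}}=\int \nu_{B0}\,dF_{B0|\Omega_{\bW}}$ and finishes the argument in one stroke.
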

\begin{subtheorem} \label{1b}
    $\Delta_P \leq \Delta_B$.
\end{subtheorem}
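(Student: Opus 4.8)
The plan is to reduce both claims to a statement about the $\Omega_{\bW}$-component of the treatment effect, exploiting the fact that, by construction, $\Delta_P$ and $\Delta_B$ carry the \emph{identical} $\Omega_{\bW}^C$-component, namely $(1-\pi_B)$ times $E(Y_B\supone\mid\bW_B\in\Omega_{\bW}^C)-E(Y_B\supzero\mid\bW_B\in\Omega_{\bW}^C)$. Indeed, combining the decomposition of $E(Y_B\supone)$ displayed just above the statement with its control-group analogue expresses $\Delta_B$ as $\pi_B$ times $\int\nu_{B1}(s)\,dF_{B1|\Omega_{\bW}}(s)-\int\nu_{B0}(s)\,dF_{B0|\Omega_{\bW}}(s)$ plus that same $\Omega_{\bW}^C$-component, so every comparison below is between the two $\pi_B$-weighted $\Omega_{\bW}$ brackets.

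For Theorem~\ref{1b} I would simply subtract $\Delta_P$ from $\Delta_B$: the $\Omega_{\bW}^C$ terms cancel exactly, and grouping the four remaining integrals by the measure against which they are taken gives
\begin{equation*}
\Delta_B-\Delta_P=\pi_B\left\{\int\big[\nu_{B1}(s)-\nu_{A0}(s)\big]\,dF_{B1|\Omega_{\bW}}(s)+\int\big[\nu_{A0}(s)-\nu_{B0}(s)\big]\,dF_{B0|\Omega_{\bW}}(s)\right\}.
\end{equation*}
Assumption (C4) makes the second integrand vanish identically and collapses the first integrand to $\nu_{B1}(s)-\nu_{B0}(s)$, which is nonnegative by (C3); since $\pi_B=P(\bW_B\in\Omega_{\bW})\in[0,1]$, this yields $\Delta_B-\Delta_P\ge0$. (Assumption (C6) is used only to make these Stieltjes integrals well defined, i.e., to ensure $\nu_{A0}$ can be evaluated on the common support of the $S_B\supg$.)

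For Theorem~\ref{1a} I would push the same decomposition one step further so that $\Delta_B$ appears as a sum of nonnegatively weighted nonnegative terms. Split the $\Omega_{\bW}$ bracket of $\Delta_B$ into the ``residual'' effect $\int[\nu_{B1}(s)-\nu_{B0}(s)]\,dF_{B1|\Omega_{\bW}}(s)$, which is $\ge0$ by (C3), plus the effect ``through the surrogate'' $\int\nu_{B0}(s)\,dF_{B1|\Omega_{\bW}}(s)-\int\nu_{B0}(s)\,dF_{B0|\Omega_{\bW}}(s)$; the $\Omega_{\bW}^C$ bracket $E(Y_B\supone\mid\bW_B\in\Omega_{\bW}^C)-E(Y_B\supzero\mid\bW_B\in\Omega_{\bW}^C)$ is $\ge0$ by (C7). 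Hence $\Delta_B=0$ forces each of these weighted pieces to be zero; in particular the ($\pi_B$-weighted) through-the-surrogate difference and the ($(1-\pi_B)$-weighted) $\Omega_{\bW}^C$ treatment difference are both zero. Substituting these into the definition of $\Delta_P$ and using (C4) once more to replace $\nu_{A0}$ by $\nu_{B0}$ in its $\Omega_{\bW}$ bracket, every surviving term of $\Delta_P$ is one of the pieces just shown to vanish, so $\Delta_P=0$.

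I expect the only genuinely substantive step to be the nonnegativity of the through-the-surrogate difference used for Theorem~\ref{1a}, which forces (C1) and (C2) to be used jointly: it is the standard fact that the expectation of a nondecreasing function (here $\nu_{B0}$, nondecreasing by (C1)) cannot decrease under a first-order stochastic-dominance shift of its argument (the shift from $F_{B0|\Omega_{\bW}}$ to $F_{B1|\Omega_{\bW}}$ guaranteed by (C2)); a clean route is integration by parts, which rewrites the difference as $\int\big(F_{B0|\Omega_{\bW}}(s)-F_{B1|\Omega_{\bW}}(s)\big)\,d\nu_{B0}(s)$ with $d\nu_{B0}\ge0$ and $F_{B0|\Omega_{\bW}}-F_{B1|\Omega_{\bW}}\ge0$ pointwise, the common support (C6) ensuring the boundary term drops out. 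Everything else is bookkeeping; for Theorem~\ref{1b} in particular the only idea is that borrowing the \emph{control}-group conditional mean $\nu_{A0}$, rather than $\nu_{A1}$, is exactly what lets the two ``leakage'' integrals telescope under (C4) --- the same mechanism behind the $\Delta_{AB}\le\Delta_B$ bound recalled in Remark 1, now localized to the subgroup $\Omega_{\bW}$. I anticipate (C5) enters the broader development (pinning down that $\Omega_{\bW}$ itself transports from Study A to Study B) rather than the algebra above.
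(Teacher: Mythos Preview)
Your argument for Theorem~\ref{1b} is correct and is essentially the paper's proof viewed from the other direction: the paper starts from $\Delta_P$, invokes (C4) to replace $\nu_{A0}$ by $\nu_{B0}$, then applies (C3) to bound $\int\nu_{B0}\,dF_{B1|\Omega_{\bW}}\le\int\nu_{B1}\,dF_{B1|\Omega_{\bW}}$ and recognizes the result as $\Delta_B$, whereas you compute $\Delta_B-\Delta_P$ directly and use the same two assumptions to see the difference is nonnegative. The extended discussion of Theorem~1.A is extraneous to the statement at hand, though it too tracks the paper's argument closely.
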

\noindent Proofs are provided in Appendix A. The first property (Theorem \ref{1a}) means that when there is truly no treatment effect on $Y$ in Study B, then there will also be no treatment effect on the pooled quantity. The second property (Theorem \ref{1b}) ensures that the pooled treatment effect provides a lower bound on the true treatment effect on $Y$. These properties are key to the validity of our testing procedure described in Section \ref{proposed} which is based on this pooled treatment effect. 

\vspace*{5mm}
\noindent \textit{Remark 2.} In practice, $\nu_{A0}(s)$ is not a random quantity because Study A data are fixed and known (i.e., $n_A$  does not $\rightarrow \infty$) in this testing framework. That is, we are specifically focused on a setting where Study A has concluded and we are making inference on Study B. Thus, an important statistical assumption we make is that the Study A data is conditioned on, and treated as fixed quantities in the probabilistic calculations. This means that our treatment effect quantities that involve Study A must be defined in a way that acknowledges and makes explicit this reliance on Study A. To this end, we define:  
 \begin{eqnarray*}
\Delta_{P|A}  &\equiv& \pi_B \left \{\int \widehat{\nu}_{A0}(s)dF_{B1|\Omega_{\bW}}(s) - \int \widehat{\nu}_{A0}(s)dF_{B0|\Omega_{\bW}}(s) \right \} \\ && \hspace*{1.5in} + (1-\pi_B) \left \{ \int y dG_{B1|\Omega_{\bW}^C}(y) - \int y dG_{B0|\Omega_{\bW}^C}(y) \right \}
\end{eqnarray*}
where
\begin{equation}
\widehat{\nu}_{A0}(s) = \frac{\sum_{i:\bW_i^A \in \Omega_{\bW}} K_{h}(S_{A0i} - s)Y_{A0i}}{\sum_{i:\bW_i^A \in \Omega_{\bW}} K_{h}(S_{A0i} - s)} \label{nu0_cond}
\end{equation}
is a consistent estimate of $\nu_{A0}(s)$.  Here, $K_h(\cdot) = K(\cdot/h)/h$ where $K(\cdot)$ is a smooth symmetric density function with finite support (e.g., standard normal density) and $h$ is a specified bandwidth, which may be data dependent. Note that this is equivalent to the Nadaraya-Watson conditional mean estimate \citep{nadaraya1964estimating, watson1964smooth}. 

 \subsection{Proposed Testing Procedure} \label{proposed}
We now propose a testing procedure based on a nonparametric estimate of $\Delta_{P|A}$.  Specifically, to test the null hypothesis that $\Delta_B=0$, we propose to instead test 
\begin{eqnarray*}
H_{0P}:&& \Delta_{P|A} = 0\\
H_{1P}:&& \Delta_{P|A} \neq 0.
\end{eqnarray*}
First, we construct a nonparametric estimate of $\Delta_{P|A}$. For each individual $i$ with $\bW_{ig} \in \Omega_{\bW}$, define $\tilde Y_{ig}^B = \widehat \nu_{A0} (S_{ig}^B),$ where $\widehat \nu_{A0}(s)$ is defined in (\ref{nu0_cond}) 
and let $$\widehat{\Delta}_{P|A}  = n_{B1}^{-1} \sum_{i=1}^{n_{B1}}\left[\delta_{i1} \tilde Y_{i1}^B + (1-\delta_{i1})  Y_{i1}^B\right] - n_{B0}^{-1} \sum_{i=1}^{n_{B0}}\left[\delta_{i0} \tilde Y_{i0}^B + (1-\delta_{i0}) Y_{i0}^B\right].$$

\noindent Note that this proposed estimator uses $Y$ for those with $\delta_{ig}=0$ and uses $S$ for those with $\delta_{ig}=1$, where $S$ is used to approximate $Y$ based on the learned conditional mean function from Study A without making any distributional assumptions.  In Appendix B, we show that $\widehat{\Delta}_{P|A}$ is a consistent estimate of $\Delta_{P|A}$ and that $\sqrt{n_B}(\widehat{\Delta}_{P|A}-\Delta_{P|A})$ converges to a mean zero normal distribution, with variance $\sigma^2_{P|A}$, which may be estimated by $\widehat{\sigma}^2_{P|A}$ with a closed form provided in Appendix B. Thus, we estimate the variance of $\widehat{\Delta}_{P|A}$ as:
\begin{eqnarray*}
  n_B^{-1}\widehat{\sigma}^2_{P|A} &=& n_{B1}^{-1} \left \{ (1-\widehat{\pi}_{B1}) s_1^2 + \widehat{\pi}_{B1} s_2^2  + \widehat \pi_{B1} (1-\widehat \pi_{B1})(\bar y_{B1C} - \bar y_{B1W})^2\right \} + \\ && \hspace{0.5in} n_{B0}^{-1}  \left \{ (1-\widehat{\pi}_{B0})s_3^2 +  \widehat{\pi}_{B0} s_4^2 + \widehat \pi_{B0} (1-\widehat \pi_{B0}) (\bar y_{B0C} - \bar y_{B0W})^2 \right \},
\end{eqnarray*}
where $s_1^2,s_2^2,s_3^2,s_4^2$ are empirical variances such that $s_1^2 = \var_{i \in \Omega_{\bW}^C}\left \{ Y_{i1}^B \right \}$, $s_2^2 = \var_{i \in \Omega_{\bW}} \left \{\tilde Y_{i1}^B \right\}$, $s_3^2 = \var_{i \in \Omega_{\bW}^C}\left \{  Y_{i0}^B \right \},$ and $s_4^2 =\var_{i \in \Omega_{\bW}} \left \{\tilde Y_{i0}^B \right \}$; $n_{BgC}$ and $n_{BgW}$ are the number of individuals in treatment group $g$ with $i\in \Omega_{\bW}^C$ and $i\in\Omega_{\bW}$, respectively; $\widehat{\pi}_{Bg} = n_{BgW}/n_{Bg}$, and $\bar y_{BgC}$ and $\bar y_{BgW}$ are the empirical means in treatment group $g$ with $i\in \Omega_{\bW}^C$ and $i\in\Omega_{\bW}$, respectively. Finally, we construct a Wald-type test statistic: 
$$\calt = \frac{\sqrt{n}_B\widehat \Delta_{P|A}}{\widehat \sigma_{P|A}},$$
and reject $H_{0P}$ when $|\calt| > \Phi^{-1}(1-\alpha/2)$, where $\Phi$ denotes the cumulative distribution function of a standard normal, and $\alpha$ is the desired level of the test. We examine the performance of this test in Section \ref{sims}. 

Notice that if Study B is conducted as we have specified, we cannot estimate $\Delta_B$ or $\Delta_{AB}$ because we have neither $Y$ nor $S$ for \textit{all} individuals in Study B; an estimate of $\Delta_B$ would require $Y$ for all individuals in Study B and an estimate of $\Delta_{AB}$ would require $S$ for all individuals in Study B. However, because we wish to compare our proposed estimate to these hypothetical estimates in the simulation study, we define those estimates here. For $\Delta_B$, if one did have (only) $Y$ for all individuals, estimation is straightforward:
$$\widehat \Delta_B = n_{B1}^{-1} \sum_{i=1}^{n_{B1}} Y_{i1}^B - 
 n_{B0}^{-1}\sum_{i=1}^{n_{B0}} Y_{i0}^B.$$
For $\Delta_{AB}$, if one did have (only) $S$ for all individuals, the resulting estimate would be:

$$\widehat \Delta_{AB} = n_{B1}^{-1} \sum_{i=1}^{n_{B1}} \widehat \mu_{A0}(S_{i1}^B) - 
 n_{B0}^{-1}\sum_{i=1}^{n_{B0}} \widehat \mu_{A0}(S_{i0}^B),$$ where 
 \begin{equation}
\widehat{\mu}_{A0}(s) = \frac{\sum_{i=1}^{n_{A0}} K_{h}(S_{A0i} - s)Y_{A0i}}{\sum_{i=1}^{n_{A0}} K_{h}(S_{A0i} - s)}, \label{mu0}
\end{equation}
the nonparametric estimate of the conditional mean, $\mu_{A0}(s)$.
Of course, this estimator is inappropriately using $S$ to predict $Y$ for all individuals, even those for whom the surrogate is weak. Testing procedures based on the Wald-type test statistics using $\widehat{\Delta}_B$ and $\widehat \Delta_{AB}$ can be similarly constructed as described above; we compare these to our proposed approach in Section \ref{sims}.  

\section{Study Design} \label{design}
In this section, we focus on a setting in which one plans to use our proposed testing procedure and is in the process of \textit{designing} Study B. That is, Study A has been completed and has been used to identify $\Omega_{\bW}$ (the subset where the surrogate is strong), and the goal is to optimally design Study B leveraging Study A information via the use of purposeful surrogate marker measurement. The null and alternative hypotheses of interest are:
\begin{eqnarray*}
    H_0: && \Delta_B=0\\
    H_1: && \Delta_B = \Psi,
\end{eqnarray*}
where we assume $\Psi>0$ without loss of generality. The power of our proposed test at level $\alpha=0.05$ is:
\begin{eqnarray*} \calp_B(\Psi) &=&  P\left ( \left |\frac{(1-\widehat{\pi}_{B})\widehat{\Delta}_{B|\Omega_{\bW}^C}+ \widehat{\pi}_{B} \widehat{\Delta}_{B|\Omega_{\bW}}}{\widehat \sigma_{P|A}/\sqrt{n_B}} \right| > 1.96 \mid \Delta_B = \Psi \right )\\
&& \mbox{with } \widehat{\Delta}_{B|\Omega_{\bW}^C} = [n_{B1C}^{-1} \sum_{i=1}^{n_{B1}}(1-\delta_{i1}) 
Y_{i1}^B - n_{B0C}^{-1} \sum_{i=1}^{n_{B0}}(1-\delta_{i0}) Y_{i0}^B]\\
&& \mbox{and } \widehat{\Delta}_{B|\Omega_{\bW}} = [ n_{B1W}^{-1} \sum_{i=1}^{n_{B1}}\delta_{i1} \tilde Y_{i1}^B  - n_{B0W}^{-1} \sum_{i=1}^{n_{B0}} \delta_{i0} \tilde Y_{i0}^B],
\end{eqnarray*}
where $\widehat \pi_{B} = (n_{B0W}+n_{B1W})/n_B$, the estimated $\pi$ in Study B. Notably, the expression above involves the specified alternative, $\Delta_B = \Psi$, which is generally the quantity that is provided when one asks for a power calculation. However, to calculate power, we will require the alternative specified not as $\Delta_B = \Psi$, but instead as the components $\Delta_{B|\Omega_{\bW}}$ and $\Delta_{B|\Omega_{\bW}^C}$, which are the treatment effects within $\Omega_W$ and $\Omega_W^C$, respectively, where 
\begin{eqnarray*}
\Delta_{B|\Omega_{\bW}} &=& \int \widehat{\nu}_{A0}(s)dF_{B1|\Omega_{\bW}}(s) - \int \widehat{\nu}_{A0}(s)dF_{B0|\Omega_{\bW}}(s) \\
\Delta_{B|\Omega_{\bW}^C} &=&\int y dG_{B1|\Omega_{\bW}^C}(y) - \int y dG_{B0|\Omega_{\bW}^C}(y)
\end{eqnarray*}
\noindent Since it would likely be unreasonable to expect a user to specify these components, we assume the specified alternative is given as $\Delta_B = \Psi$ and we define 
\begin{eqnarray*}
    \tau_K &=& \Delta_{K|\Omega_{\bW}^C}/ \Delta_K, \text{ and } \\
    \rho_K &=& \Delta_{K|\Omega_{\bW}}/ \Delta_K
\end{eqnarray*}
which we will use to translate the specified alternative into the components needed for power estimation.

Since the goal in this section is to \textit{plan} the future Study B, we must assume that only Study A data is available for the power calculation. To use Study A to guide the study design we assume, along with (C1)-(C7): 

\begin{enumerate}
\item[] (C8) The components of $\sigma_{A|P}$ and the quantities $\rho_K$ and $\tau$ are transportable from Study A to Study B e.g., $ \rho_A=\rho_B$ and $\tau_A=\tau_B$.
\end{enumerate}

\noindent This assumption is only needed for the study design procedure in this section. The quantities in (C8) that are assumed to be transportable can be estimated using generalized cross-validation (GCV) with Study A data only; in our numerical studies, we use GCV with 100 iterations with a holdout rate of 0.5 \citep{golub1979generalized}. Thus, the expected power of Study B, allowing $n_{B1},n_{B0}, \Psi,$ and $\pi_B$ to be user-specified is
\begin{eqnarray*} \calp_B(\Psi) &=& 1- \Phi \left (1.96 -  \frac{  (1-\pi_{B}) \widetilde{\tau}_A \Psi + \pi_{B}\widetilde{\rho}_A\Psi }{(n_{B0} + n_{B1})^{-1/2}\widetilde{\sigma}_{P|A}} \right),
\end{eqnarray*}
where ~$\widetilde{}$~ denotes a quantity estimated using GCV in Study A, $\widetilde{\rho}_A = \widetilde{\Delta}_{A|\Omega_{\bW}^C}/\widetilde{\Delta}_A,$  $\widetilde{\tau}_A = \widetilde{\Delta}_{A|\Omega_{\bW}}/\widetilde{\Delta}_A$, and 
\begin{eqnarray*}
(n_{B0} + n_{B1})^{-1}\widetilde{\sigma}_{P|A}^2 &=& n_{B1}^{-1} \left \{ (1-\pi_{B}) \widetilde{s}_1^2 + \pi_{B} \widetilde s_2^2  + \pi_{B} (1- \pi_{B})(\widetilde y_{A1C} - \widetilde y_{A1W})^2\right \} + \\ && \hspace{0.5in} n_{B0}^{-1}  \left \{ (1-\pi_{B}) \widetilde s_3^2 +  \pi_{B} \widetilde s_4^2 + \pi_{B} (1-\pi_{B}) (\widetilde y_{A0C} - \widetilde y_{A0W})^2 \right \}.
\end{eqnarray*}

\noindent We can also rearrange this expression to solve for the required sample size to achieve a desired power, given $\Psi$ and $\pi_B$. Suppose the desired power is $1-\beta$, i.e., $\calp_B(\Psi) = 1-\beta;$ then
\begin{eqnarray*}
n &=& \left \{\frac{1.96 - \Phi^{-1}(\beta)}{(1-\pi_{B}) \widetilde{\tau}_A \Psi + \pi_{B} \widetilde{\rho}_A\Psi } \right \}^2 \{ (1-\pi_{B}) \widetilde{s}_1^2 + \pi_{B} \widetilde s_2^2  + \pi_{B} (1- \pi_{B})(\widetilde y_{A1C} - \widetilde y_{A1W})^2 \\ &&\hspace*{1.5cm} +  (1-\pi_{B}) \widetilde s_3^2 +  \pi_{B} \widetilde s_4^2 + \pi_{B} (1-\pi_{B}) (\widetilde y_{A0C} - \widetilde y_{A0W})^2  \}
\end{eqnarray*}
\noindent where for simplicity, $n=n_{B1} = n_{B0}$. In addition to the sample size in Study B, it is important to note that one could essentially tune $\pi_B$ to influence the expected power of Study B by changing the strictness of the requirement for strong surrogacy and/or by intentionally recruiting more or less individuals with $\bW_B \in \Omega_{\bW}$. By appropriately adjusting the strictness of the surrogate requirement (which we illustrate in our numerical studies) and the number of patients that will meet them (recruitment based on $\bW)$, one can theoretically adjust $\pi_B$ to achieve the desired power in Study B. In our numerical studies and HIV application, we explore how expected power varies across different sample sizes and thresholds for strong surrogacy, thus providing a framework to use a prior study to design a future study that achieves a desired level of power while reducing costs and/or patient burden.

\section{Simulation Study} \label{sims}
The goals of the simulation study were to demonstrate the performance of the proposed estimation and testing procedures 
under various settings featuring heterogeneous surrogate information. We examined the performance of the proposed pooled treatment effect quantity and the corresponding testing procedure compared to the gold standard and the surrogate-only testing procedure. Specifically, we considered the resulting point estimates, standard error estimates, the effect size for the test, and the empirical power. Additionally, we compared the empirical power to the estimated power using the study design procedures detailed in Section \ref{design}. 

In all settings, samples sizes are $(n_1^A, n_0^A) = (1000,1100)$ and $(n_1^B, n_0^B) = (500,400)$, there was a single baseline covariate $W$, and we constructed the region of strong surrogacy $\Omega_W$ as follows. First, we used the data from Study A to estimate the proportion of the treatment effect explained (PTE) with respect to $W$ as in \cite{parast2023testing}, denoted as $\widehat{R}_S(W)$ where values close to 1 indicate strong surrogacy and values close to 0 indicate weak surrogacy. Then, we defined $\Omega_W$ as the region of the covariate space where $\widehat{R}_S(W) > \kappa$ for some specified threshold $\kappa$. We explored performance across three different thresholds ($\kappa = 0.5, 0.6, 0.7$) which correspond to varying strictness of the requirement for strong surrogacy.

Data generation details are provided in Appendix C; here, we briefly describe each setting. Setting 1 featured an extreme case of heterogeneous surrogate utility,  where the surrogate was useless for half of the population ($R_S(W) = 0$) and strong for the other half ($R_S(W) = 0.79$). Setting 2 featured data where the proportion of the treatment effect explained may take on several possible values rather than two extremes. Specifically, the covariate space was split into four equally likely regions with varying levels of surrogate strength ($R_S(W) = 0, 0.25, 0.52, 0.83$). In Setting 3, there was no treatment effect; this setting was included to examine the Type 1 error of our proposed testing procedure. 

All settings used a standard normal density for the kernel $K(\cdot)$ and were summarized over 1000 iterations. To reflect the expected setting that Study A has already taken place and is considered fixed, Study A was fixed and the simulation iterations generated Study B data only. The bandwidth was calculated as $h = h_b (n_0^A)^{-0.2}$, where $h_b$ was obtained using the \texttt{bw.nrd} function in \texttt{R} \citep{scott1992multivariate}. R code to reproduce all simulation results are available at: \url{https://github.com/rebeccaknowlton/etsi-simulations}.

Table \ref{simres} shows the results of the simulation study in terms of the estimates of $\Delta_B$, $\Delta_{AB}$, and $\Delta_P$ for $\kappa = 0.5, 0.6, 0.7$ for Settings 1-3. Recall that the entire premise of this method is that our approach offers the ability to measure only $Y$ for some people, and only $S$ for others; therefore, in practice, we could not estimate $\Delta_B$ and $\Delta_{AB}$ using the observed data (because you would need $Y$ for all individuals to estimate $\Delta_B$ and $S$ for all individuals to estimate $\Delta_{AB}$) as we have done in this simulation study. We include them here only for comparison. Here, we see that $\Delta_P$ is between $\Delta_B$ and $\Delta_{AB}$, as expected, depending on the strictness of the criteria for strong surrogacy. Likewise, the power of the test to detect a treatment effect in Settings 1 and 2 using $\Delta_P$ is between the corresponding power of the tests for $\Delta_B$ and $\Delta_{AB}$. Throughout, the empirical standard errors are close to the average standard errors. Note that in Setting 1, the empirical power using $\Delta_P$ (0.830 when $\kappa = 0.7$) is quite close to the empirical power using the gold standard (0.882) while only measuring $Y$ in 50\% of the patients. Similarly, in Setting 2, the empirical power using $\Delta_P$ (0.916 when $\kappa = 0.5$) is close to the empirical power using the gold standard (0.957) while only measuring $Y$ in 50\% of the patients. This highlights the potential benefits of our approach, achieving power less than, but close to, the gold standard power while measuring $Y$ for only a subset of the patients.

Setting 3, the null setting with a true treatment effect of 0, shows appropriate Type 1 Error control with rates close to 0.05. Table \ref{powerres} shows the estimated versus empirical power for testing $H_0: \Delta_P = 0$ for $\kappa = 0.5, 0.6, 0.7$. The estimated power is calculated using only Study A from the simulation settings, and following the procedure proposed in Section \ref{design} using generalized cross validation. Throughout, we see the estimated power is close to the empirical power. These results demonstrate reasonable performance of our proposed methods in finite samples.

\section{Example} \label{example}

We illustrate the performance of our methods on real data from two randomized AIDS clinical trials. This study was reviewed and approved by the Institutional Review Board of the University of Texas at Austin. For both trials, we considered the outcome of interest to be plasma HIV-1 RNA at baseline minus plasma HIV-1 RNA at 24 weeks post-randomization. Note that a \textit{decrease} in HIV-1 RNA (viral load) over time represents clinical improvement; therefore, this definition ensures that positive values of our outcome indicate better patient health. Since RNA is historically considered expensive to measure \citep{calmy2007hiv}, the potential surrogate marker is CD4 cell count at 24 weeks minus CD4 cell count at baseline to 24 weeks, where an increase in CD4 indicates improvement. The baseline covariate of interest is baseline CD4 cell count, which is known to affect patient response to treatment and informs current clinical guidelines for AIDS treatment \citep{NIH_HIV}. 

For Study A, we use data from the AIDS Clinical Trials Group (ACTG) 320 study \citep{hammer1997controlled}, where $n_1^A=418$ and $n_0^A=412$. We obtain a nonparametric estimate of the PTE as a function of baseline CD4 cell count via kernel smoothing, depicted in Figure \ref{aids_plot}, and we consider two potential thresholds for strong surrogacy of $\widehat{R}_S(W) > \kappa$ where $\kappa = (0.5, 0.6)$. For Study B, we use data from the AIDS Clinical Trials Group (ACTG) 193A study \citep{henry1998randomized}, where $n_1^B = 28$ and $n_0^B=37$. We have investigated the performance of our simulation studies in small sample sizes similar to these studies, and while our methods can accommodate small sample sizes in Study B, we found that they perform more reliably when Study A has larger sample sizes such as in Section \ref{sims}, e.g., approximately 1000 subjects per treatment arm. This is particularly true for the study design portion, since we must split Study A in the generalized cross-validation procedure. Given that our Study A sample size is smaller in this example application, study design results should be interpreted with caution, though they can still provide useful insights into the estimated treatment effect and estimated power of future studies. 

Table \ref{aids_table} shows the results for the estimated $\Delta_P$ at two possible $\kappa$ thresholds, $\kappa=0.5$ and $0.6$, as well as $\Delta_B$ and $\Delta_{AB}$ for comparison. Previous work has suggested that change in CD4 cell count may not be a valid surrogate for change in plasma HIV-1 RNA for certain patients \citep{o1996changes,lin1993evaluating}, and this is especially clear for patients with baseline CD4 cell count greater than roughly 50 in Figure \ref{aids_plot}. Certainly, it wouldn't be appropriate to use $S$ in Study B for everyone, and thus one should interpret $\widehat{\Delta}_{AB}$ with caution, when it suggests the treatment effect is negligible ($\widehat{\Delta}_{AB}=0.006, p=0.928)$. Meanwhile, measuring $Y$ for everyone and estimating $\Delta_B$, while costly, provides evidence of a significant treatment effect in Study B ($\widehat{\Delta}_{B} = 0.395, p=0.019$).

Our proposed method allows us to combine the surrogate information for the subset of patients for whom it is appropriate, while still measuring $Y$ for the rest of the population. The pooled estimates offer some evidence of a nonzero treatment effect closer to the estimated $\Delta_B$, while still offering some cost savings. At the higher threshold for strong surrogacy ($\kappa =0.6$), we observe a trade-off, where the magnitude of the estimated treatment effect is larger than at the lower threshold ($\kappa=0.5$) and is closer to the gold standard estimator, and the p-value correspondingly decreases ($\widehat{\Delta}_P=0.274, p=0.111$ when $\kappa=0.5$ vs. $\widehat{\Delta}_P=0.294, p=0.095$ when $\kappa=0.6$). While the small sample size of Study B is a limitation for this example, our results illustrate the potential benefit of $\widehat{\Delta}_P$ being a compromise between the expensive $\widehat{\Delta}_B$ and the $\widehat{\Delta}_{AB}$ that was inappropriate for a heterogeneous population featuring many individuals for whom the surrogate should not be used to replace the outcome.

Additionally, using the ACTG 320 Study (Study A) as a starting point, we explored the design parameters for a hypothetical future trial (Study B) that plans to use our proposed testing procedure using the methods proposed in Section \ref{design}. Figure \ref{aids_design_plot} illustrates Study B's estimated power across three key variables: total sample size, hypothesized treatment effect ($\Psi$), and the strong surrogacy threshold ($\kappa$). The figure shows that, as expected, the statistical power increases with larger sample sizes and larger treatment effects, while also highlighting the effect of varying the $\kappa$ threshold. For example, using a $\kappa=0.5$ threshold and an alternative of $\Psi =0.75$, the estimated power for a total sample size of $n=100$ is 78\%, while the estimated power for a total sample size of $n=150$ is 91\%. Our proposed framework thus enables study designers to make informed decisions about the trade-off between cost savings and power in future trials where the surrogate marker has heterogeneous utility. 

\section{Discussion} \label{discussion}

We present a novel method for efficient testing using surrogate information (ETSI), designed to reduce costs in scenarios with heterogeneous surrogate utility. ETSI provides a framework to combine surrogate information for subpopulations with strong surrogacy and the primary outcome for those with weak surrogacy, thereby leveraging findings from previous studies to streamline future research and  improve cost effectiveness. We compared ETSI's pooled treatment effect and corresponding test against two extremes: fully measuring the outcome and fully substituting the surrogate. Results show that ETSI effectively balances these approaches, capitalizing on cost-saving opportunities without compromising accuracy when surrogacy strength varies across subpopulations. We provide guidelines for using ETSI to design future studies, including setting appropriate thresholds for strong surrogacy and achieving desired statistical power. The proposed method's performance is validated through numerical simulations and illustrated using AIDS clinical trial data from two randomized studies. An R package implementing our proposed methods, $\texttt{etsi}$, is available at \url{https://github.com/rebeccaknowlton/etsi}.

In our numerical studies examining study design, we have primarily focused on influencing $\pi_B = P(\bW_B \in \Omega_B)$ by setting a threshold for strong surrogacy. However, one can further influence $\pi_B$ and therefore the power of Study B by intentionally recruiting participants who meet the characteristics of strong surrogacy, that is, purposefully recruiting participants with $\bW \in \Omega_{\bW}$. Participant recruitment is another important element of study design, and while not fully explored here, the recruitment process is indeed an additional mechanism through which we can set the power of Study B. This idea of using strategic recruitment to increase the power of future studies is notably related to the subject of enrichment in clinical trials. Enrichment studies in clinical trials typically leverage prior information to selectively recruit patient subgroups more likely to respond to treatment, aiming to enhance efficiency, reduce costs, and increase statistical power \citep{temple2010enrichment, thall2021adaptive}. However, these studies often face challenges with generalizability. While our proposed method shares the goal of improving trial efficiency through intermediate analysis, it differs fundamentally from enrichment studies. ETSI includes individuals from both strong and weak surrogacy subgroups in Study B, measuring the primary outcome $Y$ for the latter. This approach maintains generalizability while achieving cost savings and preserving statistical power. While enrichment and surrogate-based methods have traditionally been distinct areas of research, recent work has begun to explore their intersection, as in \cite{wu2022incorporating}. Further investigation into combining surrogate markers with enrichment strategies could potentially yield even greater reductions in trial costs and improvements in efficiency. This area of study warrants additional research to fully understand its benefits and applications in clinical trial design.

While we have primarily focused on heterogeneous surrogacy as the motivation for ETSI, an interesting alternative setting emerges when surrogates are universally appropriate but some participants are unwilling or unable to have the primary outcome measured. However, ETSI in its current form is not directly applicable to this setting because the decision concerning which patients will have the surrogate vs. the primary outcome measured is set by design, not by patient choice; allowing patient choice introduces new complexities. Patients who are unwilling or unable to have the primary outcome measured may fundamentally differ from the rest of the patients in many ways, both measured and unmeasured, including with respect to their ultimate treatment response. For instance, patients declining primary outcome measurement may have increased frailty or sensitivity to the measurement burden of $Y$, and these differences could potentially introduce bias into the results. Addressing this scenario would require extending the ETSI framework to account for patient preferences and potential associated biases. Such an extension could support shared decision-making in patient care, aligning with modern healthcare trends \citep{dennison2023shared, muscat2021health}. 

The proposed framework, while promising, has some notable limitations. First, our method relies on a set of assumptions that, while common in the surrogate marker literature, may be considered restrictive. Additionally, the effectiveness of our approach depends on having a sufficiently large sample size in Study A and assumes the existence of a reasonable approach to identify a surrogate as strong or weak according to patient characteristics in Study A. There is currently no consensus on the optimal method for evaluating surrogate strength, and this evaluation can be a complex problem. Lastly, a key assumption of both our testing and design approaches is that Study A is sufficiently representative of and generalizable to Study B such that information learned in Study A can be transferred to Study B. When this may not be true, one could consider extensions that borrow from recent research in transfer learning and domain adaptation \citep{kouw2019review,cai2024semi}; further research in this area is warranted.   

\vspace*{-5mm}
\section*{Acknowledgements}
This work was supported by NIDDK grant R01DK118354 (PI:Parast). We are grateful to the AIDS Clinical Trial Group (ACTG) Network for supplying the ACTG data.
\vspace*{-5mm}

\section*{Conflict of Interest Statement}
The authors have no conflicts of interest to disclose.
\vspace*{-5mm}

\section*{Data Availability Statement}
The data from the ACTG 320 study and the ACTG 193A study used in this paper are publicly available upon request from the AIDS Clinical Trial Group: \url{https://actgnetwork.org/submit-a-proposal}.
\vspace*{-5mm}

\bibliographystyle{biom}
 \bibliography{bib}

\begin{thebibliography}{}

\bibitem[\protect\citeauthoryear{Buyse, Molenberghs, Burzykowski, Renard, and
  Geys}{Buyse et~al.}{2000}]{buyse2000}
Buyse, M., Molenberghs, G., Burzykowski, T., Renard, D., and Geys, H. (2000).
\newblock The validation of surrogate endpoints in meta-analyses of randomized
  experiments.
\newblock {\em Biostatistics} {\bf 1,} 49--67.

\bibitem[\protect\citeauthoryear{Cai, Li, and Liu}{Cai
  et~al.}{2024}]{cai2024semi}
Cai, T., Li, M., and Liu, M. (2024).
\newblock Semi-supervised triply robust inductive transfer learning.
\newblock {\em Journal of the American Statistical Association} pages 1--22.

\bibitem[\protect\citeauthoryear{Calmy, Ford, Hirschel, Reynolds, Lynen,
  Goemaere, De~La~Vega, Perrin, and Rodriguez}{Calmy
  et~al.}{2007}]{calmy2007hiv}
Calmy, A., Ford, N., Hirschel, B., Reynolds, S.~J., Lynen, L., Goemaere, E.,
  De~La~Vega, F.~G., Perrin, L., and Rodriguez, W. (2007).
\newblock {H}{I}{V} viral load monitoring in resource-limited regions: optional
  or necessary?
\newblock {\em Clinical infectious diseases} {\bf 44,} 128--134.

\bibitem[\protect\citeauthoryear{De~Gruttola, Fleming, Lin, and
  Coombs}{De~Gruttola et~al.}{1997}]{degruttola1997}
De~Gruttola, V., Fleming, T., Lin, D., and Coombs, R. (1997).
\newblock Perspective: validating surrogate markers---are we being naive?
\newblock {\em Journal of Infectious Diseases} {\bf 175,} 237--246.

\bibitem[\protect\citeauthoryear{Dennison~Himmelfarb, Beckie, Allen,
  Commodore-Mensah, Davidson, Lin, Lutz, Spatz, on~Cardiovascular, and
  Nursing}{Dennison~Himmelfarb et~al.}{2023}]{dennison2023shared}
Dennison~Himmelfarb, C.~R., Beckie, T.~M., Allen, L.~A., Commodore-Mensah, Y.,
  Davidson, P.~M., Lin, G., Lutz, B., Spatz, E.~S., on~Cardiovascular, A. H.
  A.~C., and Nursing, S. (2023).
\newblock Shared decision-making and cardiovascular health: a scientific
  statement from the {A}merican {H}eart {A}ssociation.
\newblock {\em Circulation} {\bf 148,} 912--931.

\bibitem[\protect\citeauthoryear{Elliott}{Elliott}{2023}]{elliott2023surrogate}
Elliott, M.~R. (2023).
\newblock Surrogate endpoints in clinical trials.
\newblock {\em Annual Review of Statistics and its Application} {\bf 10,}
  75--96.

\bibitem[\protect\citeauthoryear{Fleming}{Fleming}{1994}]{fleming1994}
Fleming, T.~R. (1994).
\newblock Surrogate markers in {AIDS} and cancer trials.
\newblock {\em Statistics in Medicine} {\bf 13,} 1423--1435.

\bibitem[\protect\citeauthoryear{Frangakis and Rubin}{Frangakis and
  Rubin}{2002}]{frangakis2002}
Frangakis, C.~E. and Rubin, D.~B. (2002).
\newblock Principal stratification in causal inference.
\newblock {\em Biometrics} {\bf 58,} 21--29.

\bibitem[\protect\citeauthoryear{Freedman, Graubard, and Schatzkin}{Freedman
  et~al.}{1992}]{freedman1992}
Freedman, L.~S., Graubard, B.~I., and Schatzkin, A. (1992).
\newblock Statistical validation of intermediate endpoints for chronic
  diseases.
\newblock {\em Statistics in medicine} {\bf 11,} 167--178.

\bibitem[\protect\citeauthoryear{Golub, Heath, and Wahba}{Golub
  et~al.}{1979}]{golub1979generalized}
Golub, G.~H., Heath, M., and Wahba, G. (1979).
\newblock Generalized cross-validation as a method for choosing a good ridge
  parameter.
\newblock {\em Technometrics} {\bf 21,} 215--223.

\bibitem[\protect\citeauthoryear{Hammer, Squires, Hughes, Grimes, Demeter,
  Currier, Eron~Jr, Feinberg, Balfour~Jr, Deyton, et~al\mbox{.}}{Hammer
  et~al.}{1997}]{hammer1997controlled}
Hammer, S.~M., Squires, K.~E., Hughes, M.~D., Grimes, J.~M., Demeter, L.~M.,
  Currier, J.~S., Eron~Jr, J.~J., Feinberg, J.~E., Balfour~Jr, H.~H., Deyton,
  L.~R., et~al. (1997).
\newblock A controlled trial of two nucleoside analogues plus indinavir in
  persons with human immunodeficiency virus infection and {CD4} cell counts of
  200 per cubic millimeter or less.
\newblock {\em New England Journal of Medicine} {\bf 337,} 725--733.

\bibitem[\protect\citeauthoryear{Henry, Erice, Tierney, Balfour, Fischl, Kmack,
  Liou, Kenton, Hirsch, Phair, et~al\mbox{.}}{Henry
  et~al.}{1998}]{henry1998randomized}
Henry, K., Erice, A., Tierney, C., Balfour, H.~H., Fischl, M.~A., Kmack, A.,
  Liou, S.-H., Kenton, A., Hirsch, M.~S., Phair, J., et~al. (1998).
\newblock A randomized, controlled, double-blind study comparing the survival
  benefit of four different reverse transcriptase inhibitor therapies
  (three-drug, two-drug, and alternating drug) for the treatment of advanced
  {AIDS}.
\newblock {\em JAIDS Journal of Acquired Immune Deficiency Syndromes} {\bf 19,}
  339--349.

\bibitem[\protect\citeauthoryear{Katz}{Katz}{2004}]{katz2004}
Katz, R. (2004).
\newblock Biomarkers and surrogate markers: an {FDA} perspective.
\newblock {\em NeuroRx} {\bf 1,} 189--195.

\bibitem[\protect\citeauthoryear{Knowlton, Tian, and Parast}{Knowlton
  et~al.}{2024}]{knowlton2024}
Knowlton, R., Tian, L., and Parast, L. (2024).
\newblock A general framework to assess complex heterogeneity in the utility of
  a surrogate marker.
\newblock {\em Under Review} .

\bibitem[\protect\citeauthoryear{Kouw and Loog}{Kouw and
  Loog}{2019}]{kouw2019review}
Kouw, W.~M. and Loog, M. (2019).
\newblock A review of domain adaptation without target labels.
\newblock {\em IEEE transactions on pattern analysis and machine intelligence}
  {\bf 43,} 766--785.

\bibitem[\protect\citeauthoryear{Leung}{Leung}{2001}]{leung2001}
Leung, D. H.-Y. (2001).
\newblock Statistical methods for clinical studies in the presence of surrogate
  end points.
\newblock {\em Journal of the Royal Statistical Society Series A: Statistics in
  Society} {\bf 164,} 485--503.

\bibitem[\protect\citeauthoryear{Lin, Fischl, and Schoenfeld}{Lin
  et~al.}{1993}]{lin1993evaluating}
Lin, D., Fischl, M., and Schoenfeld, D. (1993).
\newblock Evaluating the role of {CD4}-lymphocyte counts as surrogate endpoints
  in human immunodeficiency virus clinical trials.
\newblock {\em Statistics in medicine} {\bf 12,} 835--842.

\bibitem[\protect\citeauthoryear{Muscat, Shepherd, Nutbeam, Trevena, and
  McCaffery}{Muscat et~al.}{2021}]{muscat2021health}
Muscat, D.~M., Shepherd, H.~L., Nutbeam, D., Trevena, L., and McCaffery, K.~J.
  (2021).
\newblock Health literacy and shared decision-making: exploring the
  relationship to enable meaningful patient engagement in healthcare.
\newblock {\em Journal of general internal medicine} {\bf 36,} 521--524.

\bibitem[\protect\citeauthoryear{Nadaraya}{Nadaraya}{1964}]{nadaraya1964estimating}
Nadaraya, E.~A. (1964).
\newblock On estimating regression.
\newblock {\em Theory of Probability \& Its Applications} {\bf 9,} 141--142.

\bibitem[\protect\citeauthoryear{NIH}{NIH}{2022}]{NIH_HIV}
NIH (2022).
\newblock Guidelines for the use of antiretroviral agents in adults and
  adolescents with {HIV}.
\newblock
  {https://clinicalinfo.hiv.gov/en/guidelines/hiv-clinical-guidelines-adult-and-adolescent-arv/plasma-hiv-1-rna-cd4-monitoring}.

\bibitem[\protect\citeauthoryear{O'{B}rien, Hartigan, Martin, Esinhart, Hill,
  Benoit, Rubin, Simberkoff, Hamilton, and on~{AIDS}}{O'{B}rien
  et~al.}{1996}]{o1996changes}
O'{B}rien, W.~A., Hartigan, P.~M., Martin, D., Esinhart, J., Hill, A., Benoit,
  S., Rubin, M., Simberkoff, M.~S., Hamilton, J.~D., and on~{AIDS}, V. A. C.
  S.~G. (1996).
\newblock Changes in plasma {HIV}-1 {RNA} and {CD4+} lymphocyte counts and the
  risk of progression to {AIDS}.
\newblock {\em New England Journal of Medicine} {\bf 334,} 426--431.

\bibitem[\protect\citeauthoryear{Parast, Cai, and Tian}{Parast
  et~al.}{2019}]{parast2019using}
Parast, L., Cai, T., and Tian, L. (2019).
\newblock Using a surrogate marker for early testing of a treatment effect.
\newblock {\em Biometrics} {\bf 75,} 1253--1263.

\bibitem[\protect\citeauthoryear{Parast, Cai, and Tian}{Parast
  et~al.}{2023a}]{parast2023testing}
Parast, L., Cai, T., and Tian, L. (2023a).
\newblock Testing for heterogeneity in the utility of a surrogate marker.
\newblock {\em Biometrics} {\bf 79,} 799--810.

\bibitem[\protect\citeauthoryear{Parast, Cai, and Tian}{Parast
  et~al.}{2023b}]{parast2023using}
Parast, L., Cai, T., and Tian, L. (2023b).
\newblock Using a surrogate with heterogeneous utility to test for a treatment
  effect.
\newblock {\em Statistics in Medicine} {\bf 42,} 68--88.

\bibitem[\protect\citeauthoryear{Parast, McDermott, and Tian}{Parast
  et~al.}{2016}]{parast2016}
Parast, L., McDermott, M.~M., and Tian, L. (2016).
\newblock Robust estimation of the proportion of treatment effect explained by
  surrogate marker information.
\newblock {\em Statistics in medicine} {\bf 35,} 1637--1653.

\bibitem[\protect\citeauthoryear{Pepe}{Pepe}{1992}]{pepe1992inference}
Pepe, M.~S. (1992).
\newblock Inference using surrogate outcome data and a validation sample.
\newblock {\em Biometrika} {\bf 79,} 355--365.

\bibitem[\protect\citeauthoryear{Prentice}{Prentice}{1989}]{prentice1989}
Prentice, R.~L. (1989).
\newblock Surrogate endpoints in clinical trials: definition and operational
  criteria.
\newblock {\em Statistics in medicine} {\bf 8,} 431--440.

\bibitem[\protect\citeauthoryear{Price, Gilbert, and van~der Laan}{Price
  et~al.}{2018}]{price2018estimation}
Price, B.~L., Gilbert, P.~B., and van~der Laan, M.~J. (2018).
\newblock Estimation of the optimal surrogate based on a randomized trial.
\newblock {\em Biometrics} {\bf 74,} 1271--1281.

\bibitem[\protect\citeauthoryear{Quan, Xu, Luo, Paux, Cho, and Chen}{Quan
  et~al.}{2023}]{quan2023utilization}
Quan, H., Xu, Z., Luo, J., Paux, G., Cho, M., and Chen, X. (2023).
\newblock Utilization of treatment effect on a surrogate endpoint for planning
  a study to evaluate treatment effect on a final endpoint.
\newblock {\em Pharmaceutical Statistics} {\bf 22,} 633--649.

\bibitem[\protect\citeauthoryear{Roberts, Elliott, and Taylor}{Roberts
  et~al.}{2021}]{roberts2021}
Roberts, E.~K., Elliott, M.~R., and Taylor, J.~M. (2021).
\newblock Incorporating baseline covariates to validate surrogate endpoints
  with a constant biomarker under control arm.
\newblock {\em Statistics in Medicine} {\bf 40,} 6605--6618.

\bibitem[\protect\citeauthoryear{Roberts, Elliott, and Taylor}{Roberts
  et~al.}{2024}]{roberts2024surrogacy}
Roberts, E.~K., Elliott, M.~R., and Taylor, J.~M. (2024).
\newblock Surrogacy validation for time-to-event outcomes with illness-death
  frailty models.
\newblock {\em Biometrical {J}ournal} {\bf 66,} 2200324.

\bibitem[\protect\citeauthoryear{Saint-Hilary, Barboux, Pannaux, Gasparini,
  Robert, and Mastrantonio}{Saint-Hilary et~al.}{2019}]{saint2019predictive}
Saint-Hilary, G., Barboux, V., Pannaux, M., Gasparini, M., Robert, V., and
  Mastrantonio, G. (2019).
\newblock Predictive probability of success using surrogate endpoints.
\newblock {\em Statistics in Medicine} {\bf 38,} 1753--1774.

\bibitem[\protect\citeauthoryear{Scott}{Scott}{1992}]{scott1992multivariate}
Scott, D. (1992).
\newblock {\em Multivariate density estimation: theory, practice, and
  visualization}.
\newblock Wiley, New York.

\bibitem[\protect\citeauthoryear{Taylor, Wang, and Thi{\'e}baut}{Taylor
  et~al.}{2005}]{taylor2005counterfactual}
Taylor, J.~M., Wang, Y., and Thi{\'e}baut, R. (2005).
\newblock Counterfactual links to the proportion of treatment effect explained
  by a surrogate marker.
\newblock {\em Biometrics} {\bf 61,} 1102--1111.

\bibitem[\protect\citeauthoryear{Temple}{Temple}{2010}]{temple2010enrichment}
Temple, R. (2010).
\newblock Enrichment of clinical study populations.
\newblock {\em Clinical Pharmacology \& Therapeutics} {\bf 88,} 774--778.

\bibitem[\protect\citeauthoryear{Thall}{Thall}{2021}]{thall2021adaptive}
Thall, P.~F. (2021).
\newblock Adaptive enrichment designs in clinical trials.
\newblock {\em Annual Review of Statistics and Its Application} {\bf 8,}
  393--411.

\bibitem[\protect\citeauthoryear{Wang and Taylor}{Wang and
  Taylor}{2002}]{wang2002}
Wang, Y. and Taylor, J.~M. (2002).
\newblock A measure of the proportion of treatment effect explained by a
  surrogate marker.
\newblock {\em Biometrics} {\bf 58,} 803--812.

\bibitem[\protect\citeauthoryear{Watson}{Watson}{1964}]{watson1964smooth}
Watson, G.~S. (1964).
\newblock Smooth regression analysis.
\newblock {\em Sankhy{\=a}: The Indian Journal of Statistics, Series A} pages
  359--372.

\bibitem[\protect\citeauthoryear{Wu, Li, Liu, and Lin}{Wu
  et~al.}{2022}]{wu2022incorporating}
Wu, L., Li, Q., Liu, M., and Lin, J. (2022).
\newblock Incorporating surrogate information for adaptive subgroup enrichment
  design with sample size re-estimation.
\newblock {\em Statistics in Biopharmaceutical Research} {\bf 14,} 493--504.

\bibitem[\protect\citeauthoryear{Zhou, Zucker, and Zhao}{Zhou
  et~al.}{2023}]{zhou2023}
Zhou, R.~R., Zucker, D.~M., and Zhao, S.~D. (2023).
\newblock Power of testing for exposure effects under incomplete mediation.
\newblock {\em The International Journal of Biostatistics} .

\end{thebibliography}

\clearpage\begin{figure}
    \centering
\captionsetup{font={small,stretch=1.1}} 

\renewcommand{\arraystretch}{1.3}

\textbf{Study A:}
\[
\begin{array}{|c|c|c|}
\hline
 W_{1g}^A  &  S_{1g}^A & Y_{1g}^A \\ \hline
W_{2g}^A  & S_{2g}^A & Y_{2g}^A \\ \hline
 W_{3g}^A & S_{3g}^A & Y_{3g}^A \\ \hline
W_{4g}^A  & S_{4g}^A & Y_{4g}^A \\ \hline
\vdots & \vdots & \vdots \\ \hline
W_{ng}^A & S_{ng}^A & Y_{ng}^A \\
\hline
\end{array}
\]

\vspace{2em}  
\textbf{Study A, $\Omega_{\bW}$ identified:}
\[
\begin{array}{|c|c|c|}
\hline
\cellcolor{pink} W_{1g}^A  & \cellcolor{pink} S_{1g}^A & \cellcolor{pink} Y_{1g}^A \\ \hline
\cellcolor{pink} W_{2g}^A  & \cellcolor{pink} S_{2g}^A & \cellcolor{pink} Y_{2g}^A \\ \hline
\cellcolor{pink} W_{3g}^A & \cellcolor{pink} S_{3g}^A & \cellcolor{pink} Y_{3g}^A \\ \hline
W_{4g}^A  & S_{4g}^A & Y_{4g}^A \\ \hline
\vdots & \vdots & \vdots \\ \hline
W_{ng}^A & S_{ng}^A & Y_{ng}^A \\
\hline
\end{array}
\]

\vspace{2em}  

\textbf{Study B:}
\[
\begin{array}{|c|c|c|c|}
\hline
\cellcolor{pink} W_{1g}^B & \cellcolor{pink} \delta_{1g}^B = 1 &  \cellcolor{pink} S_{1g}^B & \cellcolor{pink} - \\ \hline
\cellcolor{pink} W_{2g}^B & \cellcolor{pink} \delta_{2g}^B = 1 & \cellcolor{pink} S_{2g}^B & \cellcolor{pink} - \\ \hline
\cellcolor{pink} W_{3g}^B & \cellcolor{pink} \delta_{3g}^B = 1 & \cellcolor{pink} S_{3g}^B & \cellcolor{pink} - \\ \hline
W_{4g}^B & \delta_{4g}^B = 0  & - & Y_{4g}^B \\ \hline
\vdots & \vdots & \vdots  & \vdots \\ \hline
W_{ng}^B & \delta_{ng}^B = 0  & - & Y_{ng}^B \\
\hline
\end{array}
\]

    \caption{ Illustration of our setting: Our observed data consists of $\{Y_{ig}^A,S_{ig}^A,\bW_{ig}^A\}$ for each individual $i$ in Study A and $\{Y_{jg}^B(1-\delta_{jg}^B),S_{jg}^B\delta_{jg},\bW_{jg}^B,\delta_{jg}^B\}$ for each individual $j$ in Study B.  Top figure: In Study A, the surrogate $S_{ig}^A$ and the primary outcome $Y_{ig}^A$ are measured for the entire covariate space $W$. Middle figure: Study A data are used to \textit{identify} the region of strong surrogacy, $\Omega_W$, highlighted in pink. Bottom figure: Then, in Study B, we purposefully \textit{only} measure $S_{jg}^B$ for individuals with $\bW_{jg}^B \in \Omega_W$ ($\delta_{jg}=1$ for these individuals), and \textit{only} measure $Y_{jg}^B$ for individuals with $\bW_{jg}^B \in \Omega_W^C$ ($\delta_{jg}=0$ for these individuals), where $-$ indicates not measured.}
    \label{fig_exampletables}
\end{figure} 

\clearpage
\begin{table}[hptb]
\caption{Estimation results for $\Delta_B$, $\Delta_{AB}$, and $\Delta_P$ under $\kappa = 0.5, 0.6, 0.7$ in Settings 1-3. ESE reflects the empirical standard error across all simulations, ASE reflects the average of the estimated standard error across all simulations, and Effect Size is the point estimate divided by the estimated standard error. Power and Type 1 Error represent the proportion of times the null hypothesis is rejected when there is truly a nonzero treatment effect and when the treatment effect is zero, respectively. Lastly, $\pi_B$ represents the true proportion of individuals who meet the requirement for strong surrogacy. (Note these are purposefully omitted for $\Delta_P$ in Setting 3, the null setting, because the PTE is undefined.)\label{simres}}

\renewcommand{\arraystretch}{1}

\begin{center}
\begin{tabular}{|l|c|c|c|c|c|} \hline
&\multicolumn{5}{c|}{Setting 1}\\ \hline
\multicolumn{1}{|l|}{}&\multicolumn{1}{c|}{$\Delta_B$}&\multicolumn{1}{c|}{$\Delta_{AB}$}&\multicolumn{1}{c|}{$\Delta_P$: $\kappa = 0.5$}&\multicolumn{1}{c|}{$\Delta_P$: $\kappa = 0.6$}&\multicolumn{1}{c|}{$\Delta_P$: $\kappa = 0.7$}\\ \hline
Point Estimate&2.318&1.004&2.039&2.124&2.147\\ 
ESE&0.740&0.412&0.719&0.727&0.729\\ 
ASE&0.741&0.402&0.720&0.727&0.728\\ 
Effect Size&3.135&2.515&2.840&2.931&2.956\\ 
Power&0.882&0.699&0.815&0.835&0.838\\ \hline
$\pi_B$&0.000&1.000&0.500&0.500&0.500\\ 
\hline
&\multicolumn{5}{c|}{Setting 2}\\ \hline
\multicolumn{1}{|l|}{}&\multicolumn{1}{c|}{$\Delta_B$}&\multicolumn{1}{c|}{$\Delta_{AB}$}&\multicolumn{1}{c|}{$\Delta_P$: $\kappa = 0.5$}&\multicolumn{1}{c|}{$\Delta_P$: $\kappa = 0.6$}&\multicolumn{1}{c|}{$\Delta_P$: $\kappa = 0.7$}\\ \hline
Point Estimate&1.756&0.664&1.415&1.440&1.498\\ 
ESE&0.471&0.260&0.430&0.435&0.451\\ 
ASE&0.470&0.256&0.429&0.431&0.444\\ 
Effect Size&3.746&2.611&3.306&3.350&3.381\\ 
Power&0.957&0.735&0.916&0.928&0.922\\   \hline
$\pi_B$&0.000&1.000&0.500&0.250&0.250\\ 
\hline
&\multicolumn{5}{c|}{Setting 3}\\ \hline
\multicolumn{1}{|l|}{}&\multicolumn{1}{c|}{$\Delta_B$}&\multicolumn{1}{c|}{$\Delta_{AB}$}&\multicolumn{1}{c|}{$\Delta_P$: $\kappa = 0.5$}&\multicolumn{1}{c|}{$\Delta_P$: $\kappa = 0.6$}&\multicolumn{1}{c|}{$\Delta_P$: $\kappa = 0.7$}\\ \hline
Point Estimate&0.034&0.039&0.028&0.031&0.031\\ 
ESE&0.604&0.424&0.537&0.544&0.551\\ 
ASE&0.615&0.411&0.543&0.552&0.556\\ 
Effect Size&0.055&0.094&0.051&0.055&0.056\\ 
Type 1 Error&0.037&0.061&0.047&0.040&0.042\\  \hline 
$\pi_B$&0.000&1.000&   -&   -&   -\\  
\hline
\end{tabular}
\vspace{3mm}
\end{center}
\end{table}

\clearpage
\begin{table}[hptb]
\caption{Estimated versus empirical power for testing $H_0: \Delta_P = 0$ under $\kappa = 0.5, 0.6, 0.7$ in Settings 1-2 (Setting 3, the null setting, is omitted). $\pi_B$ represents the true proportion of individuals who meet the requirement for strong surrogacy. \label{powerres}}

\begin{center}
\begin{tabular}{|l|c|c|c|} \hline
&\multicolumn{3}{c|}{Setting 1}\\ \hline
\multicolumn{1}{|l|}{}&\multicolumn{1}{c|}{$\Delta_P$: $\kappa = 0.5$}&\multicolumn{1}{c|}{$\Delta_P$: $\kappa = 0.6$}&\multicolumn{1}{c|}{$\Delta_P$: $\kappa = 0.7$}\\ \hline
Estimated Power&0.784&0.860&0.854\\ 
Empirical Power&0.815&0.835&0.838\\ \hline $\pi_B$&0.500&0.500&0.500\\ 
\hline
&\multicolumn{3}{c|}{Setting 2}\\ \hline
\multicolumn{1}{|l|}{}&\multicolumn{1}{c|}{$\Delta_P$: $\kappa = 0.5$}&\multicolumn{1}{c|}{$\Delta_P$: $\kappa = 0.6$}&\multicolumn{1}{c|}{$\Delta_P$: $\kappa = 0.7$}\\ \hline
Estimated Power&0.949&0.940&0.918\\ 
Empirical Power&0.916&0.928&0.922\\  \hline
$\pi_B$ &0.500&0.250&0.250\\ 
\hline
\end{tabular}
\vspace{3mm}
\end{center}
\end{table}

\clearpage 
\begin{figure}
    \centering 
    \includegraphics[scale = 0.5]{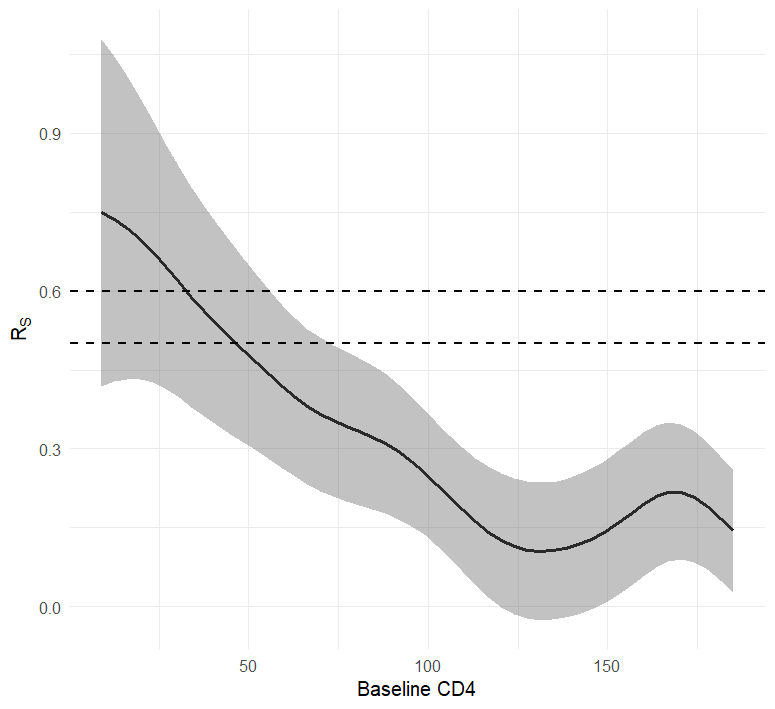}
    \caption{
    Estimated heterogeneity in $S$ (estimate of $R_S$) in ACTG 320 (Study A) plotted against the baseline CD4 cell count, the covariate of interest. The dashed lines indicate two potential thresholds for strong surrogacy: $\kappa = 0.5$ or $\kappa = 0.6$.
    } \label{aids_plot}
\end{figure}

\clearpage
\begin{table}[hptb]
\caption{Estimation results are shown for ACTG 320 as Study A and ACTG 193A as Study B. We provide the gold standard procedure $\Delta_B$ and the surrogate-only procedure $\Delta_{AB}$ for comparison, and compute the pooled estimator $\Delta_P$ at two different thresholds for strong surrogacy, $\kappa = 0.5$ and $\kappa =0.6$.\label{aids_table}}
\begin{center}
\begin{tabular}{|l|c|c|c|c|} \hline
\multicolumn{1}{|l|}{}&\multicolumn{1}{c|}{$\Delta_B$}&\multicolumn{1}{c|}{$\Delta_{AB}$}&\multicolumn{1}{c|}{$\Delta_P$: $\kappa = 0.5$}&\multicolumn{1}{c|}{$\Delta_P$: $\kappa = 0.6$}\\
\hline
Estimate&0.395&0.006&0.274&0.294\\ 
SE&0.168&0.062&0.172&0.176\\ 
Effect Size&2.353&0.090&1.593&1.672\\ 
p-value&0.019&0.928&0.111&0.095\\ 
\hline
\end{tabular}
\vspace{3mm}
\end{center}
\end{table}

\clearpage 
\begin{figure}
    \centering 
    \includegraphics[scale = 0.6]{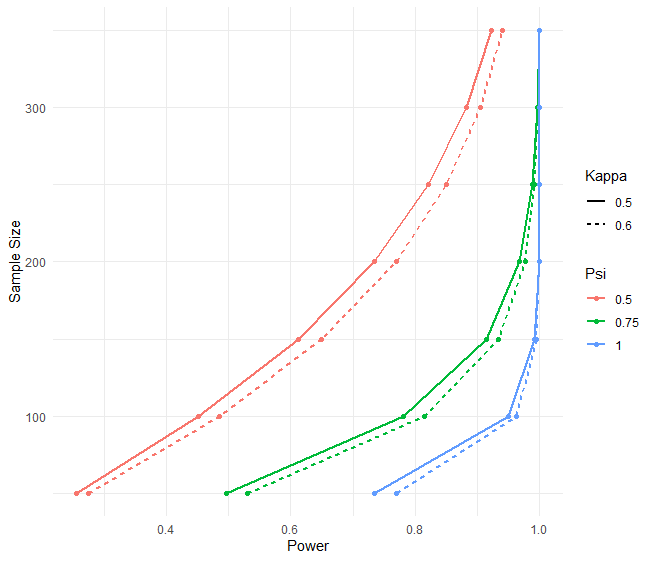}
    \caption{
    Estimated power for a future Study B given that Study A was ACTG 320, based on total sample size (displayed on the y-axis) and different options for the effect size to detect (Psi) and the threshold for strong surrogacy (Kappa).
    } \label{aids_design_plot}
\end{figure}

\clearpage
\appendix

\setcounter{table}{0}
\renewcommand{\thetable}{A\arabic{table}}
\setcounter{figure}{0}
\renewcommand{\thefigure}{A\arabic{figure}}
\renewcommand{\theequation}{A.\arabic{equation}}

\allowdisplaybreaks

\section*{Appendix A} 
 \setlength{\parindent}{0cm}
\textbf{Theorem 1.A} If $\Delta_B = 0$, then $~\Delta_P = 0$. 

 \begin{proof}
 We will show that if $\Delta_B = 0$, then $\Delta_P = 0$. Note that $\Delta_B$ can be expressed as 
\begin{eqnarray}
     \Delta_B &=& E(Y_B\supone) - E(Y_B\supzero )  \nonumber \\
     &=& \pi_B \int \nu_{B1}(s)dF_{B1|\Omega_{\bW}}(s) + (1-\pi_B) \int y dG_{B1|\Omega_{\bW}^C}(y) \nonumber\\ 
     && - \pi_B \int \nu_{B0}(s)dF_{B0|\Omega_{\bW}}(s) - (1-\pi_B) \int y dG_{B0|\Omega_{\bW}^C}(y) \nonumber\\
     &=& \pi_B \left \{ \int \nu_{B1}(s)dF_{B1|\Omega_{\bW}}(s) - \int \nu_{B0}(s)dF_{B0|\Omega_{\bW}}(s) \right \} + \label{eq1a}\\
     && (1-\pi_B) \left \{\int y dG_{B1|\Omega_{\bW}^C}(y) -  \int y dG_{B0|\Omega_{\bW}^C}(y) \right \}   \label{eq2a}
 \end{eqnarray} 
We will show that (\ref{eq1a}) and (\ref{eq2a}) are both $\geq 0$.  First, we state a needed lemma:
\begin{lemma}
If $g$ is a monotone increasing function, and $P(X > s) \geq P(Y > s) \, \forall s$, then 
\begin{equation}
    E\left[g(X)\right] \geq E\left[g(Y)\right]. \label{eqlemma}
\end{equation}
\end{lemma}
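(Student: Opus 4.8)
The plan is to recognize this as the standard fact that first-order stochastic dominance is preserved in expectation under monotone transformations, and to prove it via the quantile (inverse-CDF) coupling. First I would restate the hypothesis in terms of distribution functions: writing $F_X$ and $F_Y$ for the CDFs of $X$ and $Y$, the condition $P(X > s) \geq P(Y > s)$ for all $s$ is exactly $F_X(s) \leq F_Y(s)$ for all $s$, i.e.\ $X$ stochastically dominates $Y$ in the usual (first-order) sense.

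Next I would pass to generalized inverses $F_X^{-1}(u) = \inf\{x : F_X(x) \geq u\}$ and $F_Y^{-1}(u)$ defined analogously on $u \in (0,1)$, and record the elementary monotonicity fact that $F_X(s) \leq F_Y(s)$ for every $s$ forces $F_X^{-1}(u) \geq F_Y^{-1}(u)$ for every $u \in (0,1)$. Then, taking $U \sim \mathrm{Uniform}(0,1)$, the probability integral transform gives $F_X^{-1}(U) \stackrel{d}{=} X$ and $F_Y^{-1}(U) \stackrel{d}{=} Y$, so $(\widetilde X, \widetilde Y) := (F_X^{-1}(U), F_Y^{-1}(U))$ is a coupling of $X$ and $Y$ on a common probability space with $\widetilde X \geq \widetilde Y$ almost surely. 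Since $g$ is monotone increasing (hence Borel measurable), $\widetilde X \geq \widetilde Y$ pointwise implies $g(\widetilde X) \geq g(\widetilde Y)$ pointwise, and therefore $E[g(X)] = E[g(\widetilde X)] \geq E[g(\widetilde Y)] = E[g(Y)]$, where the outer equalities hold because $g$ is a fixed deterministic function and $\widetilde X \stackrel{d}{=} X$, $\widetilde Y \stackrel{d}{=} Y$. An equivalent route, if one prefers to avoid the coupling, is the tail-integral representation $E[g(Z)] = \int_0^\infty P(g(Z) > t)\,dt - \int_{-\infty}^0 P(g(Z) \leq t)\,dt$ together with the observation that, for monotone increasing $g$, the set $\{g(Z) > t\}$ is a right-tail event of $Z$, so $P(g(X) > t) \geq P(g(Y) > t)$ for every $t$ by the hypothesis; comparing the two integrals termwise then gives the claim.

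I do not expect a genuine obstacle here; the only points that need care rather than cleverness are (i) the handling of non-strict monotonicity of $g$, which makes its ``inverse'' set-valued and makes $\{g(Z)>t\}$ possibly a closed half-line $[c_t,\infty)$ — this is dispatched by passing to limits and using that $P(X \geq c) \geq P(Y \geq c)$ follows from $P(X > s) \geq P(Y > s)$ as $s \uparrow c$ — and (ii) integrability, since the inequality presumes both $E[g(X)]$ and $E[g(Y)]$ exist. In the paper's application this is immediate because $g = \nu_{A0}$ is bounded on the compact support $[a,b]$ of the surrogate under (C6), so I would simply note that all relevant expectations are finite and the argument above applies verbatim.
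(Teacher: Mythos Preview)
Your argument is correct and is the standard coupling proof of preservation of first-order stochastic dominance under monotone maps. It is, however, a genuinely different route from the paper's. The paper works directly with the expectation as an integral against a density on a common bounded support $[a,b]$, integrates by parts to write $E[g(X)] = g(b) - \int_a^b g'(s)F_X(s)\,ds$ (and similarly for $Y$), and then compares the two integrals using $F_X(s)\le F_Y(s)$ and $g'(s)>0$. That approach is shorter and entirely elementary, but it tacitly assumes that $g$ is differentiable with $g'>0$ and that $X,Y$ admit densities on the same compact interval---assumptions that are fine in the paper's application (by (C1) and (C6)) but are not stated in the lemma itself. Your quantile-coupling argument (and the alternative tail-integral version you sketch) avoids those extra regularity assumptions and works for any monotone $g$ and any pair of distributions, at the cost of invoking the probability integral transform. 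Either route is acceptable here; yours is the more robust statement, the paper's is the more hands-on computation tailored to the setting.
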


\begin{proof} 
Let $a$ and $b$ be the minimum and maximum, respectively, of the support of $X$ and $Y$. Then 
\begin{align*}
   E[g(X)] &= \int_a^b g(s)f_X(s)ds \\
    &= \left[g(s) F_X(s) \right]_a^b - \int_a^b g'(s) F_X(s)ds \\ 
    &= g(b) - \int_a^b g'(s)F_X(s)ds,
\end{align*}
where the integral above is obtained via integration by parts using $u = g(s)$ and $v = F_X(s)$ and $F_X$ denotes the CDF of $X$. Similarly, we can write 
\begin{align*}
    E[g(Y)] &= g(b) - \int_a^b g'(s) F_Y(s)ds.
\end{align*}

Because $P(X >s) \geq P(Y >s)\forall s$, then $F_X(s) \leq F_Y(s) \forall s$. Also, $g$ being monotone increasing implies that $g'(s) >0 \forall s$. Therefore,
\begin{align*}
    \int_a^b g'(s) F_X(s)ds &\leq \int_a^b g'(s) F_Y(s) ds \\
    -\int_a^b g'(s) F_X(s)ds &\geq -\int_a^b g'(s) F_Y(s) ds \\
    g(b) -\int_a^b g'(s) F_X(s)ds &\geq g(b)-\int_a^b g'(s) F_Y(s) ds \\ 
    E[g(X)] &\geq E[g(Y)], \text{ proving (\ref{eqlemma})}.
\end{align*}
\end{proof}

Now we examine (\ref{eq1a}):
\begin{eqnarray}
(\ref{eq1a}) &=& \pi_B \left \{ \int \nu_{B1}(s)dF_{B1|\Omega_{\bW}}(s) - \int \nu_{B0}(s)dF_{B0|\Omega_{\bW}}(s) \right \} \nonumber\\
&\geq& \pi_B \left \{ \int \nu_{B1}(s)dF_{B0|\Omega_{\bW}}(s) - \int \nu_{B0}(s)dF_{B0|\Omega_{\bW}}(s) \right \} \label{eq3a}\\
&=& \pi_B \left \{ \int [\nu_{B1}(s) - \nu_{B0}(s)] dF_{B0|\Omega_{\bW}}(s) \right \} \nonumber\\
&\geq& 0 \label{eq4a}
\end{eqnarray}
where (\ref{eq3a}) follows from Lemma 1, (C1), and (C2),  and (\ref{eq4a}) follows from (C3). 

Now we examine (\ref{eq2a}):
\begin{eqnarray}
(\ref{eq2a}) &=& (1-\pi_B) \left \{\int y dG_{B1|\Omega_{\bW}^C}(y) -  \int y dG_{B0|\Omega_{\bW}^C}(y) \right \} \nonumber \\
&=& (1-\pi_B) \left \{ E( Y_B\supone | \bW_B\supone \in \Omega_{\bW}^C) - E( Y_B\supzero | \bW_B\supzero \in \Omega_{\bW}^C) \right \} \nonumber \\
&\geq & 0 \label{eq5a}
\end{eqnarray}
where (\ref{eq5a}) follows from (C7). It then follows that when $\Delta_B = 0$, it must be the case that (\ref{eq1a}) $=0$ and (\ref{eq2a}) $=0$. Now we examine $\Delta_P$ which is defined as:
 \begin{eqnarray}
     \Delta_P &=& \pi_B \left \{\int \nu_{A0}(s)dF_{B1|\Omega_{\bW}}(s) - \int \nu_{A0}(s)dF_{B0|\Omega_{\bW}}(s) \right \} \label{eq6a} \\&&  + (1-\pi_B) \left \{ \int y dG_{B1|\Omega_{\bW}^C}(y) - \int y dG_{B0|\Omega_{\bW}^C}(y) \right \}. \label{eq7a}
 \end{eqnarray}
Note that (\ref{eq7a}) is equal to (\ref{eq2a}). Thus, when $\Delta_B=0$, this component is also 0. It remains to show that (\ref{eq6a}) $=0$ when $\Delta_B=0$; we will do this by showing that when $\Delta_B=0$, it follows that $(\ref{eq6a})\geq0$ and $(\ref{eq6a}) \leq 0$ and thus $(\ref{eq6a})=0$ (i.e., bound from below and above). First,

\begin{eqnarray}
    (\ref{eq6a}) &=& \pi_B \left \{\int \nu_{A0}(s)dF_{B1|\Omega_{\bW}}(s) - \int \nu_{A0}(s)dF_{B0|\Omega_{\bW}}(s) \right \} \nonumber \\
    &=& \pi_B \left \{\int \nu_{B0}(s)dF_{B1|\Omega_{\bW}}(s) - \int \nu_{B0}(s)dF_{B0|\Omega_{\bW}}(s) \right \} \label{eq8a}\\
    &=& \pi_B \left \{\int \nu_{B0}(s)dF_{B1|\Omega_{\bW}}(s) - \int \nu_{B1}(s)dF_{B1|\Omega_{\bW}}(s) \right \} \label{eq9a} \\
    &=& \pi_B \left \{\int [ \nu_{B0}(s) - \nu_{B1}(s)] dF_{B1|\Omega_{\bW}}(s) \right \} \nonumber \\
    \leq 0 \label{eq10a}
\end{eqnarray}
where (\ref{eq8a}) follows from (C4), (\ref{eq9a}) follows from $(\ref{eq1a})=0$, and (\ref{eq10a}) follows from (C3). And then, 

\begin{eqnarray}
    (\ref{eq6a}) &=& \pi_B \left \{\int \nu_{A0}(s)dF_{B1|\Omega_{\bW}}(s) - \int \nu_{A0}(s)dF_{B0|\Omega_{\bW}}(s) \right \} \nonumber \\
        \geq 0 
\end{eqnarray}
which follows directly from Lemma 1, (C1), and (C2).  Thus, we have shown that if $\Delta_B=0$, then $\Delta_P=0$. 
\end{proof}
\vspace*{1cm}

\textbf{Theorem 1.B}  $\Delta_P \leq \Delta_B$.
\begin{proof}
\begin{eqnarray}
     \Delta_P &=& \pi_B \left \{\int \nu_{A0}(s)dF_{B1|\Omega_{\bW}}(s) - \int \nu_{A0}(s)dF_{B0|\Omega_{\bW}}(s) \right \} \nonumber \\&&  + (1-\pi_B) \left \{ \int y dG_{B1|\Omega_{\bW}^C}(y) - \int y dG_{B0|\Omega_{\bW}^C}(y) \right \} \nonumber\\
     &=& \pi_B \left \{\int \nu_{B0}(s)dF_{B1|\Omega_{\bW}}(s) - \int \nu_{B0}(s)dF_{B0|\Omega_{\bW}}(s) \right \} \label{eq11a} \\&&  + (1-\pi_B) \left \{ \int y dG_{B1|\Omega_{\bW}^C}(y) - \int y dG_{B0|\Omega_{\bW}^C}(y) \right \}, \nonumber\\
          &\leq& \pi_B \left \{\int \nu_{B1}(s)dF_{B1|\Omega_{\bW}}(s) - \int \nu_{B0}(s)dF_{B0|\Omega_{\bW}}(s) \right \} \label{eq12a} \\&&  + (1-\pi_B) \left \{ \int y dG_{B1|\Omega_{\bW}^C}(y) - \int y dG_{B0|\Omega_{\bW}^C}(y) \right \}, \nonumber \\
          &=& \Delta_B \nonumber
 \end{eqnarray}
 where (\ref{eq11a}) follows from (C4) and (\ref{eq12a}) follows from (C3). 
\end{proof}

\section*{Appendix B}
Here, we show that $\widehat{\Delta}_{P|A}$ is a consistent estimate of $\Delta_{P|A}$ and that $\sqrt{n_B}(\widehat{\Delta}_{P|A}-\Delta_{P|A})$ converges to a mean zero normal distribution. We assume $n_{B1}/n_B = p_1>0$ and $n_{B0}/n_B = p_0>0$.

First, we have 
\begin{eqnarray*}
    \widehat{\Delta}_{P|A}  &=& n_{B1}^{-1} \sum_{i=1}^{n_{B1}}\left[(1-\delta_{i1}) Y_{i1}^B + \delta_{i1} \tilde Y_{i1}^B\right] - n_{B0}^{-1} \sum_{i=1}^{n_{B0}}\left[(1-\delta_{i0}) Y_{i0}^B + \delta_{i0} \tilde Y_{i0}^B\right] \\
    &=& n_{B1}^{-1} \sum_{i=1}^{n_{B1}}(1-\delta_{i1}) Y_{i1}^B + n_{B1}^{-1} \sum_{i=1}^{n_{B1}}\delta_{i1}\tilde Y_{i1}^B - n_{B0}^{-1} \sum_{i=1}^{n_{B0}}(1-\delta_{i0})Y_{i0}^B - n_{B0}^{-1} \sum_{i=1}^{n_{B0}}\delta_{i0}\tilde Y_{i0}^B\\
    &=& n_{B1}^{-1} \sum_{i=1}^{n_{B1C}} Y_{i1}^B + n_{B1}^{-1} \sum_{i=1}^{n_{B1W}}\tilde Y_{i1}^B - n_{B0}^{-1} \sum_{i=1}^{n_{B0C}}Y_{i0}^B - n_{B0}^{-1} \sum_{i=1}^{n_{B0W}}\tilde Y_{i0}^B \\ 
    &=& \frac{n_{B1C}}{n_{B1}} \sum_{i=1}^{n_{B1C}} \frac{Y_{i1}^B}{n_{B1C}} + \frac{n_{B1W}}{n_{B1}} \sum_{i=1}^{n_{B1W}}\frac{\tilde Y_{i1}^B}{n_{B1W}} - \frac{n_{B0C}}{n_{B0}}  \sum_{i=1}^{n_{B0C}} \frac{Y_{i0}^B}{n_{B0C}} - \frac{n_{B0W}}{n_{B0}} \sum_{i=1}^{n_{B0W}} \frac{\tilde Y_{i0}^B}{n_{B0W}} \\
    &=& (1- \widehat \pi_{B1}) \sum_{i=1}^{n_{B1C}} \frac{Y_{i1}^B}{n_{B1C}} + \widehat \pi_{B1} \sum_{i=1}^{n_{B1W}}\frac{\tilde Y_{i1}^B}{n_{B1W}} - (1- \widehat  \pi_{B0}) \sum_{i=1}^{n_{B0C}} \frac{Y_{i0}^B}{n_{B0C}} - \widehat \pi_{B0} \sum_{i=1}^{n_{B0W}} \frac{\tilde Y_{i0}^B}{n_{B0W}}.
\end{eqnarray*}

Let $\widehat \pi_{B0} \approx \widehat \pi_{B1} = \widehat \pi_B$, and thus,
\begin{eqnarray*}
    \widehat{\Delta}_{P|A}  &=& (1- \widehat \pi_{B}) \sum_{i=1}^{n_{B1C}} \frac{Y_{i1}^B}{n_{B1C}} + \widehat \pi_{B} \sum_{i=1}^{n_{B1W}}\frac{\tilde Y_{i1}^B}{n_{B1W}} - (1- \widehat  \pi_{B}) \sum_{i=1}^{n_{B0C}} \frac{Y_{i0}^B}{n_{B0C}} - \widehat \pi_{B} \sum_{i=1}^{n_{B0W}} \frac{\tilde Y_{i0}^B}{n_{B0W}}.
\end{eqnarray*}

By the law of large numbers, each of these empirical averages converges in probability to their true means as $n_{B1C},n_{B1W},n_{B0C},n_{B0W} \to \infty$:
\begin{eqnarray*}
    \widehat \pi_{B} &\overset{P}{\rightarrow}& \pi_{B} \\
    n_{B1C}^{-1}\sum_{i=1}^{n_{B1C}} Y_{i1}^B &\overset{P}{\rightarrow}& E[Y_{i1}^B|i\in \Omega_{\bW}^C] \\
    n_{B1W}^{-1}\sum_{i=1}^{n_{B1W}} \tilde Y_{i1}^B &\overset{P}{\rightarrow}& E[ \tilde Y_{i1}^B|i\in \Omega_{\bW}] \\
    n_{B0C}^{-1}\sum_{i=1}^{n_{B0C}} Y_{i0}^B &\overset{P}{\rightarrow}& E[Y_{i0}^B|i\in \Omega_{\bW}^C] \\
    n_{B0W}^{-1}\sum_{i=1}^{n_{B0W}} \tilde Y_{i0}^B &\overset{P}{\rightarrow}& E[ \tilde Y_{i0}^B|i\in \Omega_{\bW}] 
\end{eqnarray*}

Then, by Slutsky's theorem, we have 
\begin{eqnarray*}
    \widehat{\Delta}_{P|A} &\overset{P}{\rightarrow}&  (1-\pi_{B})  E[Y_{i1}^B|i\in \Omega_{\bW}^C] + \pi_{B} E[ \tilde Y_{i1}^B|i\in \Omega_{\bW}] \\ && \hspace{1in} - (1-\pi_{B})  E[Y_{i0}^B|i\in \Omega_{\bW}^C] - \pi_{B} E[ \tilde Y_{i0}^B|i\in \Omega_{\bW}] \\ 
    &=& \pi_B \left\{E[ \tilde Y_{i1}^B|i\in \Omega_{\bW}] - E[ \tilde Y_{i0}^B|i\in \Omega_{\bW}] \right\} \\ && \hspace{1in} + (1-\pi_B) \left\{ E[Y_{i1}^B|i\in \Omega_{\bW}^C] -E[Y_{i0}^B|i\in \Omega_{\bW}^C] \right\} \\
    &=& \pi_B \left \{\int \widehat{\nu}_{A0}(s)dF_{B1|\Omega_{\bW}}(s) - \int \widehat{\nu}_{A0}(s)dF_{B0|\Omega_{\bW}}(s) \right \} \\ && \hspace*{1in} + (1-\pi_B) \left \{ \int y dG_{B1|\Omega_{\bW}^C}(y) - \int y dG_{B0|\Omega_{\bW}^C}(y) \right \} \\ 
    &=& \Delta_{P|A}.
\end{eqnarray*}

Therefore, the proposed estimator is consistent. Next, we will show that $\sqrt{n_B} (\widehat \Delta_{P|A} - \Delta_{P|A})$ converges to a mean 0 normal distribution. Recall that 
\begin{eqnarray*}
    \Delta_{P|A} &=& \pi_B \left \{\int \widehat{\nu}_{A0}(s)dF_{B1|\Omega_{\bW}}(s) - \int \widehat{\nu}_{A0}(s)dF_{B0|\Omega_{\bW}}(s) \right \} \\ && \hspace*{1in} + (1-\pi_B) \left \{ \int y dG_{B1|\Omega_{\bW}^C}(y) - \int y dG_{B0|\Omega_{\bW}^C}(y) \right \} \\  
    &=& \pi_B \int \widehat{\nu}_{A0}(s)dF_{B1|\Omega_{\bW}}(s) - \pi_B \int \widehat{\nu}_{A0}(s)dF_{B0|\Omega_{\bW}}(s) \\ && \hspace*{1in}  + (1-\pi_B) \int y dG_{B1|\Omega_{\bW}^C}(y)  - (1-\pi_B) \int y dG_{B0|\Omega_{\bW}^C}(y).
\end{eqnarray*}

Define:
\begin{eqnarray*}
    I_1 &=&  \int \widehat{\nu}_{A0}(s)dF_{B1|\Omega_{\bW}}(s) \\
    I_2 &=&  \int \widehat{\nu}_{A0}(s)dF_{B0|\Omega_{\bW}}(s) \\
    I_3 &=& \int y dG_{B1|\Omega_{\bW}^C}(y) \\ 
    I_4 &=&  \int y dG_{B0|\Omega_{\bW}^C}(y),
\end{eqnarray*}
such that $\Delta_{P|A} = \pi_B I_1 - \pi_B I_2 + (1-\pi_B) I_3 - (1-\pi_B) I_4$. Note that $I_1, I_2, I_3, I_4$ are defined in different treatment groups in different regions of the covariate space and $\Delta_{P|A}$ conditions on Study A.  We then define the corresponding estimators 
\begin{eqnarray*}
    \widehat I_1 &=& \sum_{i=1}^{n_{B1W}}  \frac{\widehat\nu_{A0}(S_{i1}^B)}{n_{B1W}} \\
    \widehat I_2 &=&  \sum_{i=1}^{n_{B0W}} \frac{\widehat \nu_{A0}(S_{i0}^B)}{n_{B0W}} \\
    \widehat I_3 &=&  \sum_{i=1}^{n_{B1C}} \frac{Y_{i1}^B}{n_{B1C}}  \\
    \widehat I_4 &=&  \sum_{i=1}^{n_{B0C}}\frac{Y_{i0}^B}{n_{B0C}}
\end{eqnarray*}

such that $\widehat{\Delta}_{P|A} = \widehat{\pi}_{B1} \widehat I_1 - \widehat{\pi}_{B0}  \widehat I_2 + (1-\widehat{\pi}_{B1})  \widehat I_3 - (1-\widehat{\pi}_{B0})  \widehat I_4$. By the central limit theorem, 
\begin{eqnarray*}
    \sqrt{n_{B1W}} (\widehat I_1 - I_1) &\overset{d}{\rightarrow}& N(0,\sigma_2^2), \\
    \sqrt{n_{B0W}} (\widehat I_2 - I_2) &\overset{d}{\rightarrow}& N(0,\sigma_4^2), \\
    \sqrt{n_{B1C}} (\widehat I_3 - I_3) &\overset{d}{\rightarrow}& N(0,\sigma_1^2), \\
    \sqrt{n_{B0C}} (\widehat I_4 - I_4) &\overset{d}{\rightarrow}& N(0,\sigma_3^2),
\end{eqnarray*}
where $\sigma_1^2 = \var_{i \in \Omega_{\bW}^C}\left \{ Y_{i1}^B \right \}$, $\sigma_2^2 = \var_{i \in \Omega_{\bW}} \left \{\tilde Y_{i1}^B \right\}$, $\sigma_3^2 = \var_{i \in \Omega_{\bW}^C}\left \{  Y_{i0}^B \right \},$ and $\sigma_4^2 =\var_{i \in \Omega_{\bW}} \left \{\tilde Y_{i0}^B \right \}$. Note that 
\begin{eqnarray*}
\widehat{\pi}_{B1} \widehat I_1 - \pi_B I_1 &=&\widehat{\pi}_{B1} \widehat I_1 - \widehat{\pi}_{B1}  I_1 +\widehat{\pi}_{B1}  I_1  - \pi_B I_1 \\
&=&\widehat{\pi}_{B1} ( \widehat I_1 -  I_1)  + I_1(\widehat{\pi}_{B1}  - \pi_B )
\end{eqnarray*}
and, 
\begin{eqnarray*}
(1-\widehat{\pi}_{B1}) \widehat I_3 - (1-\pi_B) I_3\\
&=& \widehat I_3 -\widehat{\pi}_{B1} \widehat I_3 - I_3 +\pi_B I_3\\
&=& (\widehat I_3 - I_3) - (\widehat{\pi}_{B1} \widehat I_3  -\pi_B I_3)\\
&=& (\widehat I_3 - I_3) - \widehat{\pi}_{B1} ( \widehat I_3 -  I_3)  - I_3(\widehat{\pi}_{B1}  - \pi_B ),
\end{eqnarray*}
and $\widehat{\pi}_{B1}  \overset{p}{\rightarrow} \pi_B$ and $\sqrt{n_{B1}}(\widehat{\pi}_{B1}  - \pi_B) \overset{d}{\rightarrow} N(0, \pi_B(1-\pi_B)).$ Thus, 
\begin{eqnarray*}
\sqrt{n_B}\mathcal{C}_1 &\equiv&\sqrt{n_B} \{ \widehat{\pi}_{B1} \widehat I_1 - \pi_B I_1 + (1-\widehat{\pi}_{B1}) \widehat I_3 - (1-\pi_B) I_3 \} \\
&=& \sqrt{n_B} \{ \widehat{\pi}_{B1} ( \widehat I_1 -  I_1)  + I_1(\widehat{\pi}_{B1}  - \pi_B ) + (\widehat I_3 - I_3) - \widehat{\pi}_{B1} ( \widehat I_3 -  I_3)  - I_3(\widehat{\pi}_{B1}  - \pi_B ) \}\\
&=& \sqrt{n_B} \{\widehat{\pi}_{B1} ( \widehat I_1 -  I_1)   + (1-\widehat{\pi}_{B1})(\widehat I_3 - I_3)  + (I_1-I_3)(\widehat{\pi}_{B1}  - \pi_B ) \}\\
&\overset{d}{\rightarrow}& N(0,p_1^{-1}[\pi_B\sigma_2^2 + (1-\pi_B)\sigma_1^2 + \pi_B(1-\pi_B)(I_1-I_3)^2])
\end{eqnarray*}
By similar arguments, 
\begin{eqnarray*}
\sqrt{n_B}\mathcal{C}_2 &\equiv& \sqrt{n_B} \{\widehat{\pi}_{B0} \widehat I_2 - \pi_B I_2 + (1-\widehat{\pi}_{B0}) \widehat I_4 - (1-\pi_B) I_4 \} \\
&\overset{d}{\rightarrow}& N(0,p_0^{-1}[\pi_B\sigma_4^2 + (1-\pi_B)\sigma_3^2 +  \pi_B(1-\pi_B)(I_2-I_4)^2])
\end{eqnarray*}
And now, 
\begin{eqnarray*}
\sqrt{n_B} (\widehat \Delta_{P|A} - \Delta_{P|A}) &=& \sqrt{n_B} (\mathcal{C}_1 - \mathcal{C}_2)\\
&\overset{d}{\rightarrow}& N(0,\sigma^2_{P|A}) 
\end{eqnarray*}
where \\

\vspace*{-2.5cm}
$$\sigma^2_{P|A}=p_1^{-1}[\pi_B\sigma_2^2 + (1-\pi_B)\sigma_1^2 + \pi_B(1-\pi_B)(I_1-I_3)^2] +  p_0^{-1}[\pi_B\sigma_4^2 + (1-\pi_B)\sigma_3^2 +  \pi_B(1-\pi_B)(I_2-I_4)^2]$$
We estimate $\sigma^2_{P|A}$ plugging in empirical estimates of the corresponding quantities:
\begin{eqnarray*}
\widehat{\sigma}^2_{P|A}&=&\widehat{p}_1^{-1}[\widehat{\pi}_{B1} s_2^2 + (1-\widehat\pi_{B1})s_1^2 + \widehat\pi_{B1}(1-\widehat\pi_{B1})(\bar{y}_{B1W}-\bar{y}_{B1C})^2] \\&&+  \widehat p_0^{-1}[\widehat\pi_{B0} s_4^2 + (1-\widehat \pi_{B0})s_3^2 +  \widehat\pi_{B0}(1-\widehat\pi_{B0})(\bar{y}_{B0W}-\bar{y}_{B0C})^2]
\end{eqnarray*}
where $s_1,s_2,s_3,s_4, \bar{y}_{B0W},\bar{y}_{B0C}, \bar{y}_{B1W},\bar{y}_{B1C}$ are defined in the main text, $\widehat{p}_1 = n_{B1}/n_B$, and $\widehat{p}_0 = n_{B0}/n_B$.

\section*{Appendix C}
Here, we describe in detail the simulation settings. Setting 1 features an extreme case of hetereogeneous surrogate utility, where the surrogate is useless for half of the population ($R_S(W) = 0$ when $W < 5$) and strong for the other half ($R_S(W) = 0.79$ when $W >= 5$). Specifically for both Study A and B, $W \sim U(0,10)$, $S_1 \sim Gamma(shape = 2.55, scale = 2.55)$, $S_0 \sim Gamma(shape = 2.4, scale = 2.4)$. For $W < 5$,
\begin{eqnarray*}
    Y_1 &=& 2.8 + N(0,1) \\
    Y_0 &=& 1 + N(0,1).
\end{eqnarray*}
For $W >= 5$, 
\begin{eqnarray*}
    Y_1 &=& 2.9S_1 + N(0,1) \\
    Y_0 &=& 2.8S_0 + N(0,1).
\end{eqnarray*}
The goal of Setting 2 was to show the performance when the PTE has several possible values rather than two extremes. Specifically for both Study A and B, $W\sim U(0,10)$, $S_1\sim Gamma(shape = 2.55, scale = 2.55)$, $S_0\sim Gamma(shape = 2.4, scale = 2.4)$. For $W < 2.5$,
\begin{eqnarray*}
    Y_1 &=& 2.8 + N(0,3) \\ 
    Y_0 &=& 1 + N(0,3) \\ 
    &\implies& R_S(W) = 0.
\end{eqnarray*}
For $2.5 \leq W < 5$,
\begin{eqnarray*}
    Y_1 &=& 1.1 + 0.4 S_1 + N(0,3) \\ 
    Y_0 &=& 0.8 + 0.3 S_0 + N(0,3) \\ 
    &\implies& R_S(W) = 0.25.
\end{eqnarray*}
For $5 \leq W < 7.5$,
\begin{eqnarray*}
    Y_1 &=& 1.5 + 1.6 S_1 + N(0,3) \\ 
    Y_0 &=& 1 + 1.5 S_0 + N(0,3) \\ 
    &\implies& R_S(W) = 0.52.
\end{eqnarray*}
For $7.5 \leq W$, 
\begin{eqnarray*}
    Y_1 &=& 1.85 S_1 + N(0,3) \\
    Y_0 &=& 1.8 S_0 + N(0,3) \\
    &\implies& R_S(W) = 0.83.
\end{eqnarray*}

In setting 3, there was no treatment effect and thus $R_S(W)$ is undefined. Specifically, in both study A and B, $W\sim U(0,12)$, $S_1= S_0 = N(2, 3^2)$, $Y_1 = Y_0 = 2S + W + N(0,6^2)$.

\clearpage
\appendix

\setcounter{table}{0}
\renewcommand{\thetable}{A\arabic{table}}
\setcounter{figure}{0}
\renewcommand{\thefigure}{A\arabic{figure}}
\renewcommand{\theequation}{A.\arabic{equation}}

\allowdisplaybreaks

\section*{Appendix A}
 \setlength{\parindent}{0cm}
\textit{Proof of Theorem 1.} Here, we will show that 
$$\sqrt{n}\left(\begin{array}{c}\widehat{\Delta}(\bw)-\Delta(\bw) \\ \widehat{\Delta}_S(\bw)-\Delta_S(\bw)\end{array}\right)=\frac{1}{\sqrt{n_1}}\sum_{i=1}^{n_1} \left(\begin{array}{c} \xi_{1i}(\bw) \\ \xi_{2i}(\bw) \end{array}\right)+\frac{1}{\sqrt{n_0}}\sum_{j=1}^{n_0} \left(\begin{array}{c} \zeta_{1j}(\bw) \\ \zeta_{2j}(\bw) \end{array}\right)+o_p(1),$$
where $\{\xi_{1i}(\bw), \xi_{2i}(\bw)\}'$ are independent and identically distributed (iid) mean zero random vector, $\{\zeta_{1j}(\bw), \zeta_{2j}(\bw)\}'$ are also iid mean zero random vectors, and thus converges weakly to a mean zero bivariate normal distribution with a variance-covariance matrix of $\Sigma_{\Delta}(\bw)$
by the central limit theorem. First, note that 
\begin{eqnarray*}
&& \sqrt{n}(\widehat{\Delta}(\bw) - \Delta(\bw)) \\ &=&  \sqrt{n}\{(\widehat \beta_1 + (\widehat \beta_2 + \widehat \beta_3) \widehat \alpha_1 + \boldsymbol{\widehat\beta}_5\trans \bw - \widehat \beta_2  \widehat \alpha_0) - (\beta_1 + (\beta_2+\beta_3)\alpha_1 + \boldsymbol{\beta}_5\trans \bw - \beta_2 \alpha_0) \} \\
&=& \sqrt{n}\{(\widehat \beta_1 - \beta_1) + ((\widehat \beta_2 + \widehat \beta_3) \widehat \alpha_1 - (\beta_2+\beta_3)\alpha_1 )+ (\boldsymbol{\widehat\beta}_5\trans \bw - \boldsymbol{\beta}_5\trans \bw) - (\widehat \beta_2  \widehat \alpha_0 - \beta_2 \alpha_0) \} \\
&=& \sqrt{n}\{(\widehat \beta_1 - \beta_1) + (\boldsymbol{\widehat\beta}_5\trans \bw - \boldsymbol{\beta}_5\trans \bw)  + ((\widehat \beta_2 + \widehat \beta_3) \widehat \alpha_1 - (\beta_2+\beta_3)\alpha_1 )  \\ &&- (\widehat \beta_2  \widehat \alpha_0 - \beta_2 \alpha_0)- \alpha_0(\widehat \beta_2 - \beta_2)  + \alpha_0(\widehat \beta_2 - \beta_2)  - \beta_2(\widehat \alpha_0 - \alpha_0)  + \beta_2(\widehat \alpha_0 - \alpha_0)   \}\\
&=& \sqrt{n}\{(\widehat \beta_1 - \beta_1) + (\boldsymbol{\widehat\beta}_5\trans \bw - \boldsymbol{\beta}_5\trans \bw) - \alpha_0(\widehat \beta_2 - \beta_2) - \beta_2(\widehat \alpha_0 - \alpha_0)  \\ &&
+ ((\widehat \beta_2 + \widehat \beta_3) \widehat \alpha_1 - (\beta_2+\beta_3)\alpha_1 ) - \widehat \beta_2  \widehat \alpha_0 + \beta_2 \alpha_0  + \alpha_0\widehat \beta_2 - \alpha_0\beta_2  +\beta_2\widehat \alpha_0 - \beta_2 \alpha_0\}\\
&=& \sqrt{n}\{(\widehat \beta_1 - \beta_1) + (\boldsymbol{\widehat\beta}_5\trans \bw - \boldsymbol{\beta}_5\trans \bw) - \alpha_0(\widehat \beta_2 - \beta_2) - \beta_2(\widehat \alpha_0 - \alpha_0)  \\ &&
+ ((\widehat \beta_2 + \widehat \beta_3) \widehat \alpha_1 - (\beta_2+\beta_3)\alpha_1 ) - \widehat \beta_2  \widehat \alpha_0 + \alpha_0\widehat \beta_2  +\beta_2\widehat \alpha_0 - \beta_2 \alpha_0\}\\
&=& \sqrt{n}\{(\widehat \beta_1 - \beta_1) + (\boldsymbol{\widehat\beta}_5\trans \bw - \boldsymbol{\beta}_5\trans \bw) - \alpha_0(\widehat \beta_2 - \beta_2) - \beta_2(\widehat \alpha_0 - \alpha_0)  \\ &&
+ ((\widehat \beta_2 + \widehat \beta_3) \widehat \alpha_1 - (\beta_2+\beta_3)\alpha_1 ) - (\widehat \beta_2-  \beta_2 )( \widehat \alpha_0 - \alpha_0) \}\\
&=& \sqrt{n}\{(\widehat \beta_1 - \beta_1) + (\boldsymbol{\widehat\beta}_5\trans \bw - \boldsymbol{\beta}_5\trans \bw) - \alpha_0(\widehat \beta_2 - \beta_2) - \beta_2(\widehat \alpha_0 - \alpha_0)  \\ &&
+ ((\widehat \beta_2 + \widehat \beta_3) \widehat \alpha_1 - (\beta_2+\beta_3)\alpha_1 ) \} + o_p(1)\\
&=& \sqrt{n}\{(\widehat \beta_1 - \beta_1) + (\boldsymbol{\widehat\beta}_5\trans \bw - \boldsymbol{\beta}_5\trans \bw) - \alpha_0(\widehat \beta_2 - \beta_2) - \beta_2(\widehat \alpha_0 - \alpha_0) \\ &&
+ \alpha_1((\widehat \beta_2 + \widehat \beta_3) - (\beta_2+\beta_3)) + (\beta_2 + \beta_3)( \widehat \alpha_1 - \alpha_1 ) \} + o_p(1)\\
&=& \sqrt{n}\{(\widehat \beta_1 - \beta_1) + (\boldsymbol{\widehat\beta}_5\trans \bw - \boldsymbol{\beta}_5\trans \bw) - \alpha_0(\widehat \beta_2 - \beta_2) + \alpha_1((\widehat \beta_2 + \widehat \beta_3) - (\beta_2+\beta_3))  \\ &&
+ (\beta_2 + \beta_3)( \widehat \alpha_1 - \alpha_1 ) - \beta_2(\widehat \alpha_0 - \alpha_0)\} + o_p(1)\\
&=& \sqrt{n}\{(\widehat \beta_1 - \beta_1)  + (\alpha_1 - \alpha_0)(\widehat \beta_2 - \beta_2) + \alpha_1(\widehat \beta_3 - \beta_3)  + (\boldsymbol{\widehat\beta}_5\trans \bw - \boldsymbol{\beta}_5\trans \bw)\\ &&
+ (\beta_2 + \beta_3)( \widehat \alpha_1 - \alpha_1 ) - \beta_2(\widehat \alpha_0 - \alpha_0)\} + o_p(1)
\end{eqnarray*}

Thus, using matrix operation, we may express this quantity as:
$$
\sqrt{n}(\widehat{\Delta}(\bw) - \Delta(\bw))=\mathbf{B}_0(\bw)\trans\left(\begin{array}{c} \sqrt{n}(\widehat{\beta}_1-\beta_1)\\ \sqrt{n}(\widehat{\beta}_2-\beta_2) \\ \sqrt{n}(\widehat{\beta}_3-\beta_3)\\ \sqrt{n}(\widehat{\beta}_5-\beta_5) \\
                            \sqrt{n}(\widehat{\alpha}_1-\alpha_1)\\ \sqrt{n}(\widehat{\alpha}_0-\alpha_0)
                            \end{array}\right)+o_p(1)
$$
where $\mathbf{B}_0(\bw)=(1, \alpha_1-\alpha_0, \alpha_1, \bw\trans, \beta_2+\beta_3, -\beta_2)\trans$.
Now, let $\mathbf{0}_p$ be a $p$-dimensional vector consisting of all zeros, and $\mathbf{0}_{p\times p}$ be a $p$ by $p$ matrix consisting of all zero entries. It can be shown that 
\begin{eqnarray*}
\sqrt{n}\left(\begin{array}{c} (\widehat{\beta}_1-\beta_1)\\ (\widehat{\beta}_2-\beta_2) \\ (\widehat{\beta}_3-\beta_3)\\ (\widehat{\beta}_5-\beta_5) \\
                            (\widehat{\alpha}_1-\alpha_1)\\ (\widehat{\alpha}_0-\alpha_0)
                            \end{array}\right)
                       &=&\mathbf{C}_0\mathbf{A}_0^{-1}\left[\frac{1}{\sqrt{n_1}}\sum_{i=1}^{n_1}      \sqrt{\pi_1} \left(\begin{array}{c} \epsilon_{1i} \\ 
                            S_{1i}\epsilon_{1i} \\ 
                            \bW_{1i}\epsilon_{1i} \\ 
                            0\\
                            0\\
                            \mathbf{0}_p\\
                            S_{1i}-\alpha_1\\
                            0
                            \end{array}\right) + \frac{1}{\sqrt{n_0}}\sum_{j=1}^{n_0} \sqrt{\pi_0} \left(\begin{array}{c} 0 \\ 
                            0 \\ 
                            \mathbf{0}_p \\ 
                            \epsilon_{0j}\\
                            S_{0i}\epsilon_{0j}\\
                           \bW_{0i}\epsilon_{0j}\\
                            0\\
                            S_{0i}-\alpha_0
                            \end{array}\right)\right\}\\
                            &=&\mathbf{C}_0\mathbf{A}_0^{-1}\left[\frac{1}{\sqrt{n_1}}\sum_{i=1}^{n_1} \mathbf{u}_{1i} + \frac{1}{\sqrt{n_0}}\sum_{j=1}^{n_0} \mathbf{v}_{0j}\right]
             \end{eqnarray*}         
where $\epsilon_{1i}=Y_{1i}-(\beta_0+\beta_1)-(\beta_2+\beta_3)S_{1i}-(\beta_4+\beta_5)\trans\bW_{1i},$ $\epsilon_{0j}=Y_{0j}-\beta_0-\beta_2S_{0j}-\beta_4\trans\bW_{0j},$
$$\mathbf{C}_0=\left(\begin{array}{cccccccc}
0 & 1 & 0 & 0 &  \mathbf{0}_p\trans &  \mathbf{0}_p\trans & 0 & 0\\
0 & 0 & 1 & 0 &  \mathbf{0}_p\trans &  \mathbf{0}_p\trans & 0 & 0\\
0 & 0 & 0 & 1 &  \mathbf{0}_p\trans &  \mathbf{0}_p\trans & 0 & 0\\
0 & 0 & 0 & 0 &  \mathbf{0}_{p\times p} &  \mathbf{I}_p & 0 & 0\\
0 & 0 & 0 & 0 &  \mathbf{0}_p\trans &  \mathbf{0}_p\trans & 1 & 0\\
0 & 0 & 0 & 0 &  \mathbf{0}_p\trans &  \mathbf{0}_p\trans & 0 & 1
\end{array}\right)$$
and
$$\mathbf{A}_0=E\left(\begin{array}{cccccccc}    
\pi_1 &  \pi_1S^{(1)}  & \pi_1\bW\trans  & \pi_0  & \pi_0S^{(0)}  & \pi_0\bW\trans & 0 & 0\\
\pi_1 &  \pi_1S^{(1)}  & \pi_1\bW\trans  &  0     &   0     &  \mathbf{0}_p\trans       & 0 & 0\\
\pi_1S^{(1)} & \pi_1(S^{(1)})^2 & \pi_1S^{(1)}\bW\trans & \pi_0S^{(0)} & \pi_0(S^{(0)})^2 & \pi_0S^{(0)}\bW\trans & 0 & 0\\
\pi_1S^{(1)} & \pi_1(S^{(1)})^2 & \pi_1S^{(1)}\bW\trans & 0 &  0 & \mathbf{0}_p\trans & 0 & 0\\
\pi_1\bW & \pi_1S^{(1)}\bW & \pi_1\bW\bW\trans & \pi_0\bW & \pi_0S^{(0)}\bW & \pi_0\bW\bW\trans & 0 & 0\\
\pi_1\bW & \pi_1S^{(1)}\bW & \pi_1\bW\bW\trans & \mathbf{0}_p & \mathbf{0}_p & \mathbf{0}_{p\times p} & 0 & 0\\
0 & 0 & \mathbf{0}_p\trans & 0 & 0 & \mathbf{0}_p\trans & \pi_1 & 0\\
0 & 0 & \mathbf{0}_p\trans & 0 & 0 & \mathbf{0}_p\trans & 0 & \pi_0
\end{array}\right).$$

Therefore, $$\sqrt{n}(\widehat{\Delta}(\bw) - \Delta(\bw))=\frac{1}{\sqrt{n_1}}\sum_{i=1}^{n_1} \mathbf{B}_0(\bw)\trans\mathbf{C}_0\mathbf{A}_0^{-1}\mathbf{u}_{1i} + \frac{1}{\sqrt{n_0}}\sum_{j=1}^{n_0} \mathbf{B}_0(\bw)\trans\mathbf{C}_0\mathbf{A}_0^{-1}\mathbf{v}_{0j} +o_p(1)$$
where $\mathbf{u}_{1i}$ and $\mathbf{v}_{0j}$ are iid mean zero random vectors. Similarly, 
\begin{align*}
    &\sqrt{n}(\widehat{\Delta}_S(\bw)-\Delta_S(\bw))=\mathbf{B}_1(\bw)\trans \left(\begin{array}{c} \sqrt{n}(\widehat{\beta}_1-\beta_1)\\ \sqrt{n}(\widehat{\beta}_3-\beta_3) \\ \sqrt{n}(\widehat{\beta}_5-\beta_5)\end{array} \right)\\
    =&\frac{1}{\sqrt{n_1}}\sum_{i=1}^{n_1} \mathbf{B}_1(\bw)\trans\mathbf{C}_1\mathbf{A}_0^{-1}\mathbf{u}_{1i} + \frac{1}{\sqrt{n_0}}\sum_{j=1}^{n_0} \mathbf{B}_1(\bw)\trans\mathbf{C}_1\mathbf{A}_0^{-1}\mathbf{v}_{0j} +o_p(1),
\end{align*}
where $\mathbf{B}_1(\bw)=(1,1, \bw\trans)\trans$ and 
$$ \mathbf{C}_1=\left(\begin{array}{cccccccc}
0 & 1 & 0 & 0 &  \mathbf{0}_p\trans &  \mathbf{0}_p\trans & 0 & 0\\
0 & 0 & 0 & 1 &  \mathbf{0}_p\trans &  \mathbf{0}_p\trans & 0 & 0\\
0 & 0 & 0 & 0 &  \mathbf{0}_{p\times p} &  \mathbf{I}_p & 0 & 0
\end{array}\right).$$

Therefore, as $n \rightarrow \infty, $ by the central limit theorem, $$\sqrt{n}\left(\begin{array}{c}\widehat{\Delta}(\bw)-\Delta(\bw) \\ \widehat{\Delta}_S(\bw)-\Delta_S(\bw)\end{array}\right)=\frac{1}{\sqrt{n_1}}\sum_{i=1}^{n_1} \left(\begin{array}{c} \mathbf{B}_0(\bw)\trans\mathbf{C}_0 \\ \mathbf{B}_1(\bw)\trans\mathbf{C}_1\end{array}\right)\mathbf{A}_0^{-1}\mathbf{u}_{1i}+\frac{1}{\sqrt{n_0}}\sum_{j=1}^{n_0} \left(\begin{array}{c} \mathbf{B}_0(\bw)\trans\mathbf{C}_0 \\ \mathbf{B}_1(\bw)\trans\mathbf{C}_1\end{array}\right)\mathbf{A}_0^{-1}\mathbf{v}_{0j}+o_p(1)$$
converges weakly to a mean zero bivariate normal distribution with variance-covariance matrix, 
$$\Sigma_{\Delta}(\bw) =\left(\begin{array}{c} \mathbf{B}_0(\bw)\trans\mathbf{C}_0 \\ \mathbf{B}_1(\bw)\trans\mathbf{C}_1\end{array}\right)\mathbf{A}_0^{-1} E \left(\mathbf{u}^{\otimes 2}+\mathbf{v}^{\otimes 2}\right) \mathbf{A}_0^{-1} (\mathbf{C}_0\trans\mathbf{B}_0(\bw), \mathbf{C}_1\trans\mathbf{B}_1(\bw)) $$
where $\mathbf{a}^{\otimes 2} = \mathbf{a}\mathbf{a}^{\transpose}$ for a vector $\mathbf{a}$. By the delta method, $\sqrt{n}\left\{\widehat{R}_S(\bw) - R_S(\bw)\right\}$ also converges weakly to a mean zero normal distribution as $n \rightarrow \infty $, if $\Delta(\bw) \neq 0$.

\section*{Appendix B}
\textit{Proof of Theorem 2.} 
We must show that: 
\begin{align}
&\sqrt{nh}\left(\begin{array}{c}\widehat{\Delta}^K(\bw)-\Delta^K(\bw) \\ \widehat{\Delta}_{S}^K(\bw)-\Delta_{S}(\bw)\end{array}\right)\\
=&\frac{\sqrt{h}}{\sqrt{\pi_1n_1}}\sum_{i=1}^{n_1} \left(\begin{array}{c} \xi_{1i}(u_0(\bw)) \\ \xi_{2i}(u_0(\bw)) \end{array}\right)+\frac{\sqrt{h}}{\sqrt{\pi_0n_0}}\sum_{i=1}^{n_0} \left(\begin{array}{c} \zeta_{1i}(u_0(\bw)) \\ \zeta_{2i}(u_0(\bw)) \end{array}\right)+o_p(1),
\label{eq:firstoderapp}
\end{align}

under the regularity conditions in Theorem 2,  where $\xi_{1i}$ are iid terms, $\xi_{2i}$ are iid terms.
Coupled with the central limit theorem, it would follow that $\sqrt{nh}\left(\widehat{\Delta}^K(\bw)-\Delta^K(\bw)\right)$ and $\sqrt{nh}\left(\widehat{\Delta}_{S}^K(\bw)-\Delta_{S}(\bw)\right)$ converge weakly to a mean zero bivariate normal distribution with a variance-covariance matrix of $\Sigma_{K}(\bw)$. Lastly, by the delta method, $\widehat{R}^K_S(\bw)$ would converge weakly to a mean zero normal distribution as $n \rightarrow \infty $ as long as $\Delta^K(\bw) \neq 0$.  Without loss of generality, we assume that $h_0=h_1=h_2=h_3=h_4=h$ in the following. To demonstrate the first order approximation (\ref{eq:firstoderapp}), we  first show that  
\begin{equation}\widehat{\Delta}^K(\bw) -\widetilde{\Delta}^K(\bw)=o_p\left(\frac{1}{\sqrt{nh}} \right),\label{eq:appx1}\end{equation}
where 
$$\widetilde{\Delta}^K(\bw) = \widetilde{m}_1\left\{u_0(\bw)\right\} - \widetilde{m}_0\left\{u_0(\bw)\right\},$$
$$\widetilde{m}_g(u) = \frac{ \sum_{i=1}^{n_g} K_{h_g}(U_{gi} - u)Y_{gi}}{\sum_{i=1}^{n_g}K_{h_g}(U_{gi} - u)}, g\in \{0, 1\},$$
and $U_{gi}=u_0(\bW_{gi})=u(\bW_{gi}, \theta_0).$  Now, consider 
\begin{align*}
&\frac{1}{n_g}\left\{\sum_{i=1}^{n_g}K_{h_g}(\widehat{U}_{gi} - u)-\sum_{i=1}^{n_g}K_{h_g}(U_{gi} - u)\right\}\\
=&\frac{1}{n_gh_g}\sum_{i=1}^{n_g}\left[K\left(\frac{\widehat{U}_{gi} - u}{h_g}\right)-K\left(\frac{U_{gi} - u}{h_g}\right)\right]\\
=&\frac{1}{n_gh_g}\sum_{i=1}^{n_g} \left\{\dot{K}\left(\frac{U_{gi} - u}{h_g}\right)\left(\frac{\widehat{U}_{gi}-U_i}{h_g}\right)\right\}(1+o_p(1))\\
=&\frac{1}{n_gh_g}\sum_{i=1}^{n_g} \left\{\dot{K}\left(\frac{U_{gi} - u}{h_g}\right)\left(\frac{g(\bW_i)(\hat{\theta}-\theta_0)}{h_g}\right)\right\}(1+o_p(1))\\
=&\left[\frac{1}{n_g}\sum_{i=1}^{n_g}\dot{K}_{h_g}(U_{gi} - u)g(\bW_{gi})\right]\times O_p\left(\frac{1}{\sqrt{n_g}h_g}\right)\\
=& O_p\left(h_g+\frac{\log(n_g)\sqrt{h_g}}{\sqrt{n_g}} \right)O_p\left(\frac{1}{\sqrt{n_g}h_g}\right),
\end{align*}
which is $o_p\{(n_gh_g)^{-1/2}\}$ uniformly in $u,$ where $g(\bw)=\partial u(\bw, \theta)/\partial \theta|_{\theta=\theta_0}.$
Next, consider 
\begin{align*}
&\frac{1}{n_g}\left\{\sum_{i=1}^{n_g}K_{h_g}(\widehat{U}_{gi} - u)Y_i-\sum_{i=1}^{n_g}K_{h_g}(U_{gi} - u)Y_{gi}\right\}\\
=&\left[\frac{1}{n_g}\sum_{i=1}^{n_g}\dot{K}_{h_g}(U_{gi} - u)g(\bW_{gi})Y_{gi}\right]\times O_p\left(\frac{1}{\sqrt{n_g}h_g}\right)\\
=& O(h_g)O_p\left(1+\frac{\log(n_g)}{\sqrt{n_gh_g}} \right)O_p\left(\frac{1}{\sqrt{n_g}h_g}\right)=o_p\left( \frac{1}{\sqrt{n_gh_g}}\right).
\end{align*}
Therefore, 
$$\widetilde{m}_g(u)-\widehat{m}_g(u)=o_p\left( \frac{1}{\sqrt{n_gh_g}}\right).$$
Next 
\begin{align*}
&\widehat{\Delta}^K(\bw) -\widetilde{\Delta}^K(\bw)\\
=&\left[\widehat{m}_1\left\{u(\bw, \hat{\theta})\right\}-\widetilde{m}_1\left\{u_0(\bw)\right\}\right] - \left[\widehat{m}_0\left\{u(\bw, \hat{\theta})\right\}-\widetilde{m}_0\left\{u_0(\bw)\right\}\right]\\
\le & \left|\widehat{m}_1\left\{u(\bw, \hat{\theta})\right\}-\widetilde{m}_1\left\{u(\bw, \hat{\theta})\right\}\right| + \left|\widetilde{m}_1\left\{u(\bw, \hat{\theta})\right\}-\widetilde{m}_1\left\{u_0(\bw)\right\}\right|\\ &+\left|\widehat{m}_0\left\{u(\bw, \hat{\theta})\right\}-\widetilde{m}_0\left\{\hat{u}(\bw, \hat{\theta})\right\}\right|+\left|\widetilde{m}_0\left\{u(\bw, \hat{\theta})\right\}-\widetilde{m}_0\left\{u_0(\bw)\right\}\right|\\
=&o_p\left( \frac{1}{\sqrt{n_gh_g}}\right)+\left|\widetilde{m}_1\left\{u(\bw, \hat{\theta})\right\}-\widetilde{m}_1\left\{u_0(\bw)\right\}\right|+\left|\widetilde{m}_0\left\{u(\bw, \hat{\theta})\right\}-\widetilde{m}_0\left\{u_0(\bw)\right\}\right|\\
=&o_p\left( \frac{1}{\sqrt{n_gh_g}}\right)+O_p\left(\frac{1}{n_g}\right),
\end{align*}
where we used the fact that 
$\sup_{\bw}\left|u(\bw, \hat{\theta})-u_0(\bw)\right|=O_p(n_g^{-1/2})$
and 
$$ \sup_u \left|\frac{d \widetilde{m}_g(u)}{du}-\frac{d m_g(u)}{du}\right|=O_p\left(\frac{\log(n)}{\sqrt{n_gh_g^3}}+  h_g\right)=o_p(1),$$
for $\epsilon\in (0, 1/3).$  Therefore, (\ref{eq:appx1}) is established. Next, we need to establish that 
\begin{equation}\widehat{\Delta}_S^K(\bw) -\widetilde{\Delta}_S^K(\bw)=o_p\left(\frac{1}{\sqrt{nh}} \right),\label{eq:appx2}\end{equation}
where
$$\widetilde{\Delta}^K_{S}(\bw) = \widetilde{m}_{10}(u_0(\bw))-\widetilde{m}_0(u_0(\bw)),$$
 where $\widetilde{m}_{10}(u)=\int \widetilde{\mu}_{1}(s, u)d\widetilde{F}_{S^{(0)}}(s\mid u),$
 \begin{align*}
  \widetilde{F}_{S^{(0)}}(s \mid u)=\frac{\sum_{i=1}^{n_0}K_{h_2}(U_{0i}-u)I(S_{0i}\le s)}{\sum_{i=1}^{n_0}K_{h_2}(U_{0i}-u)}~~
\mbox{and}~~\widetilde{\mu}_{1}(s, u)= \frac{\sum_{i=1}^{n_1}K_{h_3}(S_{1i}-s)K_{h_4}(U_{1i}-u)Y_{1i}}{\sum_{i=1}^{n_1}K_{h_3}(S_{1i}-s)K_{h_4}(U_{1i}-u)}.
  \end{align*}
Since we have already established that 
$$\widehat{m}_0\left\{u(\bw, \hat{\theta})\right\}-\widetilde{m}_0(u_0(\bw))=o_p\left(\frac{1}{\sqrt{n_0h_0}} \right)$$
uniformly in $\bw$, we only need to show that 
$$ \widehat{m}_{10}\left\{u(\bw, \hat{\theta})\right\}-\widetilde{m}_{10}(u_0(\bw))=o_p\left(\frac{1}{\sqrt{n_0h_0}} \right).$$
Consider 
\begin{align*}
 &\frac{1}{n_1h_3}\sum_{i=1}^{n_1} \left\{\dot{K}_{h_3}(S_{1i}-s)K_{h_4}(\hat{U}_{1i}-u)-\dot{K}_{h_3}(S_{1i}-s)K_{h_4}(U_{1i}-u)\right\}\\
 =&\frac{1}{n_1h_3}\sum_{i=1}^{n_1}\left[K\left(\frac{\widehat{U}_{1i} - u}{h_4}\right)-K\left(\frac{U_{1i} - u}{h_4}\right)\right]\dot{K}_{h_3}(S_{1i}-s)\\
=&\frac{1}{n_1h_3}\sum_{i=1}^{n_1} \left\{\dot{K}\left(\frac{U_{1i} - u}{h_4}\right)\left(\frac{\widehat{U}_{1i}-U_{1i}}{h_4}\right)\right\}\dot{K}_{h_3}(S_{1i}-s)(1+o_p(1))\\
=&\frac{1}{n_1h_3^2}\sum_{i=1}^{n_1} \left\{\dot{K}\left(\frac{U_{1i} - u}{h_4}\right)\dot{K}\left(\frac{S_{1i} - s}{h_3}\right)\left(\frac{g(\bW_{1i})(\hat{\theta}-\theta_0)}{h_4}\right)\right\}(1+o_p(1))\\
=&\left[\frac{1}{n_1}\sum_{i=1}^{n_1}\dot{K}_{h_4}(U_{1i} - u)\dot{K}_{h_3}(S_{1i}-s)g(\bW_i)\right]\times O_p\left(\frac{1}{\sqrt{n_1}h_3}\right)\\
=& O(h_3h_4)O_p\left(1+\frac{\log(n_1)}{\sqrt{n_1h_3h_4}} \right)O_p\left(\frac{1}{\sqrt{n_1}h_3}\right),
\end{align*}
which is $o_p\{(n_1h_3)^{-1/2}\}$ uniformly in $u,$ if $n_1h_3h_4/\log(n) \rightarrow \infty.$  Following the same arguments above, it is straightforward to show that 
$$\widehat{F}_{S^{(0)}}(s\mid u)-\widehat{F}_{S^{(0)}}(s\mid u)=o_p\left(\frac{1}{\sqrt{n_0h_2}} \right).$$
Therefore, 
\begin{align*}
 &\widehat{m}_{10}(u)-\widetilde{m}_{10}(u)\\
=& \int \widehat{\mu}_1(s, u)d\widehat{F}_{S^{(0)}}(s\mid u)-\int \widetilde{\mu}_1(s, u)d\widetilde{F}_{S^{(0)}}(s\mid u)\\
=&   \int \widehat{F}_{S^{(0)}}(s\mid u)\frac{\partial \widehat{\mu}_1(s, u)}{\partial s}ds-\int \widetilde{F}_{S^{(0)}}(s\mid u) \frac{\partial \widetilde{\mu}_1(s, u)}{\partial s}ds\\
=& \int \widehat{F}_{S^{(0)}}(s\mid u)\left[\frac{\partial \widehat{\mu}_1(s, u)}{\partial s}- \frac{\partial \widetilde{\mu}_1(s, u)}{\partial s}\right]ds + \int \left[\widehat{F}_{S^{(0)}}(s\mid u)-\widetilde{F}_{S^{(0)}}(s\mid u)\right] \frac{\partial \widetilde{\mu}_1(s, u)}{\partial s}ds\\
=& o_p\left(\frac{1}{\sqrt{n_0h_2}} +\frac{1}{\sqrt{n_1h_3}}\right).
\end{align*}
Coupled with the fact that $U(\bw, \hat{\theta})-u_0(\bw)=O_p(n^{-1/2})$, it implies that (\ref{eq:appx2}). Therefore, we have established that under the condition of Theorem 2
$$ \sqrt{nh}\left(\begin{array}{c}\widehat{\Delta}^K(\bw)-\widetilde{\Delta}^K(\bw) \\ \widehat{\Delta}_{S}^K(\bw)-\widetilde{\Delta}_{S}(\bw)\end{array}\right)=o_p(1).$$
Based on the result by \cite{parast2023testing}, we have
\begin{equation}
\small
\sqrt{nh}\left(\begin{array}{c}\widetilde{\Delta}^K(u)-\Delta(u) \\ \widetilde{\Delta}_{S}^K(u)-\Delta_{S}(u)\end{array}\right)=\frac{\sqrt{h}}{\sqrt{\pi_1n_1}}\sum_{i=1}^{n_1} \left(\begin{array}{c} \xi_{1i}(u) \\ \xi_{2i}(u) \end{array}\right)+\frac{\sqrt{h}}{\sqrt{\pi_0n_0}}\sum_{i=1}^{n_0} \left(\begin{array}{c} \zeta_{1i}(u) \\ \zeta_{2i}(u) \end{array}\right)+o_p(1),
\end{equation}
where $u=u_0(\bw),$ $\pi_g=n_g/n,$
\begin{align*}
\xi_{1i}(u)&=\frac{K_{h_1}(U_{1i}-u)}{f_U(u)}\\
\zeta_{1i}(u)&=-\frac{K_{h_0}(U_{0i}-u)}{f_U(u)}\\
\xi_{2i}(u)&=\left[\frac{K_{h_2}(U_{0i}-u)}{f_U(u)}\left\{\mu_1(S_{0i}, U_{0i})-\mu_{10}(U_{0i}) \right\}-\frac{K_{h_0}(U_{0i}-u)}{f_U(u)}\left\{Y_{0i}-\mu_0(U_{0i})\right\} \right]\\
\zeta_{2i}(u)&=\frac{K_{h_4}(U_{1i}-u)}{f_U(u)}\left\{Y_{1i}-\mu_1(S_{1i}, U_{1i})\right\} \frac{f_{S^{(0)}}(S_{1i} \mid u)}{f_{S^{(1)}}(S_{1i}\mid u)},
\end{align*}
$f_U(\cdot)$ is the density function of $U=u_0(\bW),$ $\mu_1(s, u)=E(Y^{(1)}\mid S^{(1)}=s, U=u),$ $\mu_g(u)=E(Y^{(g)}\mid U=u),$ $f_{S^{(g)}}(s\mid u)$ and $F_{S^{(g)}}(s\mid u)$ are the density function and cumulative distribution function of $S^{(g)}\mid U=u,$ respectively, and $\mu_{10}(u)=\int \mu_1(s, u)dF_{S^{(0)}}(s\mid u).$ Therefore, 
$$ \sqrt{nh}\left(\begin{array}{c}\widehat{\Delta}^K(\bw)-\Delta^K(\bw) \\ \widehat{\Delta}_{S}^K(\bw)-\Delta_{S}(\bw)\end{array}\right)=\frac{\sqrt{h}}{\sqrt{\pi_1n_1}}\sum_{i=1}^{n_1} \left(\begin{array}{c} \xi_{1i}(u_0(\bw)) \\ \xi_{2i}(u_0(\bw)) \end{array}\right)+\frac{\sqrt{h}}{\sqrt{\pi_0n_0}}\sum_{i=1}^{n_0} \left(\begin{array}{c} \zeta_{1i}(u_0(\bw)) \\ \zeta_{2i}(u_0(\bw)) \end{array}\right)+o_p(1)$$
converges weakly to mean zero bivariate Gaussian distribution with a variance-covariance matrix of $\Sigma_\Delta(\bw).$ It follows from the delta method that 
$$\sqrt{nh}\left\{ \widehat{R}_S(\bw)-R_S(\bw)\right\}$$ also converges weakly to a mean zero Gaussian distribution $N(0, \sigma_R^2(\bw)),$ as the sample size $n \rightarrow \infty.$

\color{black}
\section*{Appendix C}
In Appendix B, we showed that:
$$ \sqrt{nh}\left(\begin{array}{c}\widehat{\Delta}^K(\bw)-\widetilde{\Delta}^K(\bw) \\ \widehat{\Delta}_{S}^K(\bw)-\widetilde{\Delta}_{S}(\bw)\end{array}\right)=o_p(n^{-\epsilon}),$$
implying that 
$$\sqrt{nh}\left\{\widehat{R}_S(\bw)-\widetilde{R}_S(\bw)\right\}=o_p(n^{-\epsilon})$$
where $\widehat{R}_S(\bw)=1-\widetilde{\Delta}_S(\bw)/\widetilde{\Delta}^K(\bw),$ and $\epsilon>0.$ 
Therefore, there exists an $A_n=O(\sqrt{\log(n)})$ and $B_n=O(\sqrt{\log(n)})$ such that
\begin{align*} 
&P\left\{A_n\left( \sup_w \sqrt{nh} \left| \frac{\widehat{R}_S(\bw)-R_S(\bw)}{\sigma_R(\bw)}\right|-B_n\right)\le d\right\}\\
&=P\left\{A_n\left( \sup_w \sqrt{nh} \left| \frac{\widetilde{R}_S(\bw)-R_S(\bw)}{\sigma_R(\bw)}\right|-B_n\right)\le d\right\}\left\{1+o(1)\right\}.
\end{align*}
Under the null hypothesis that $H_0: R_S(\bw)=R_0$, 
$$\sqrt{nh}\left\{|\Omega_{\bw}|^{-1}\int_{\Omega_{\bw}} \widehat{R}_S(\bw)d\bw -R_0\right\}=o_p(1)$$ and, denoting our observed data as $\cal{D}$,
\begin{align*} 
P\left\{A_n\left( {\cal T}-B_n\right)\le d\right\}=&P\left\{A_n\left( \sup_w \sqrt{nh} \left| \frac{\widehat{R}_S(\bw)-R_0}{\sigma_R(\bw)}\right|-B_n\right)\le d\right\}\left\{1+o(1)\right\}\\
&P\left\{A_n\left( \sup_w \sqrt{nh} \left| \frac{\widetilde{R}_S(\bw)-R_0}{\sigma_R(\bw)}\right|-B_n\right)\le d\right\}\left\{1+o(1)\right\}\\
=&P\left\{A_n\left( \sup_{\bw} {\cal T}^{*(b)}-B_n\right)\le d \big |  ~\cal{D}\right\}+o_p(1).
\end{align*}
Thus, the null distribution can be approximated by the described sampling method.

\clearpage 
\section*{Appendix D}
\begin{table}[hptb]
\caption{Estimation results using the proposed parametric and two-stage approaches for AIDS 320 Clinical Trial, for ten $\bw$ combinations, representing the diagonal entries of the grid $\bL$, capturing combinations close to the boundary and the center of the grid. \label{aids_table}}
\begin{center}
\begin{tabular}{|c|c|c|c|c|} \hline
\multicolumn{5}{|c|}{Parametric}\\  \hline
\multicolumn{1}{|c|}{Baseline CD4}&\multicolumn{1}{c|}{Baseline Age}&\multicolumn{1}{c|}{$R_S$}&\multicolumn{1}{c|}{SE($R_S$)}&\multicolumn{1}{c|}{95\% CI for $R_S$}\\ \hline
  17.00      &32.20       &0.32        &0.05        &(0.23, 0.42)\\ 
 31.47      &33.79       &0.30        &0.04        &(0.22, 0.39)\\ 
 45.93      &35.38       &0.28        &0.04        &(0.21, 0.36)\\ 
 60.40      &36.96       &0.26        &0.04        &(0.20, 0.33)\\ 
 74.87      &38.55       &0.25        &0.03        &(0.19, 0.30)\\ 
 89.33      &40.14       &0.23        &0.03        &(0.18, 0.29)\\ 
103.80      &41.72       &0.22        &0.03        &(0.17, 0.27)\\ 
118.27      &43.31       &0.21        &0.03        &(0.16, 0.26)\\ 
132.73      &44.90       &0.20        &0.03        &(0.15, 0.25)\\ 
147.20      &46.48       &0.19        &0.02        &(0.14, 0.24)\\ 
\hline
\multicolumn{5}{|c|}{Two-stage}\\ 
\hline
\multicolumn{1}{|c|}{Baseline CD4}&\multicolumn{1}{c|}{Baseline Age}&\multicolumn{1}{c|}{$R_S$}&\multicolumn{1}{c|}{SE($R_S$)}&\multicolumn{1}{c|}{95\% CI for $R_S$}\\ \hline
  17.00       &32.20        &0.84         &0.21         &(0.36, 1.36) \\ 
 31.47       &33.79        &0.72         &0.20         &(0.22, 0.99) \\ 
 45.93       &35.38        &0.49         &0.22         &(0.17, 0.90) \\ 
 60.40       &36.96        &0.41         &0.14         &(0.21, 0.76) \\ 
 74.87       &38.55        &0.47         &0.12         &(0.26, 0.73) \\ 
 89.33       &40.14        &0.51         &0.17         &(0.19, 0.74) \\ 
103.80       &41.72        &0.43         &0.14         &(0.13, 0.66) \\ 
118.27       &43.31        &0.24         &0.14         &(0.07, 0.57) \\ 
132.73       &44.90        &0.22         &0.14         &(-0.03, 0.54)\\ 
147.20       &46.48        &0.13         &0.13         &(-0.12, 0.41)\\ 
\hline
\end{tabular}
\vspace{3mm}
\end{center}
\end{table}

\end{document}